\newtheorem{theorem}{Theorem}[section]
\newtheorem{definition}[theorem]{Definition}
\newtheorem{lemma}[theorem]{Lemma}
\newtheorem{proposition}[theorem]{Proposition}
\newtheorem{corollary}[theorem]{Corollary}
\newtheorem{remark}[theorem]{Remark}
\newcommand{\cp}{\,\square\,}
\newcommand{\diam}{{\rm diam}}
\newcommand{\adim}{{\rm adim}}
\newcommand{\Adim}{{\rm Adim}}
\newcommand{\BarabasiAlbert}{\mathit{BarabasiAlbert}}
\title{Burning some myths on privacy properties of social networks against active attacks}
\author{
Serafino Cicerone $^{a,}$\thanks{Email: \texttt{serafino.cicerone@univaq.it}}
\and
Gabriele {Di Stefano} $^{a,}$\thanks{Email: \texttt{gabriele.distefano@univaq.it}}
\and
Sandi Klav\v zar $^{b,c,d,}$\thanks{Email: \texttt{sandi.klavzar@fmf.uni-lj.si}}
\and
Ismael G. Yero $^{e,}$\thanks{Email: \texttt{ismael.gonzalez@uca.es}}
}
\begin{document}

\maketitle

\begin{center}
	$^a$ Department of Information Engineering, Computer Science, and Mathematics, \\
	     University of L'Aquila, Italy \\
	\medskip

	$^b$ Faculty of Mathematics and Physics, University of Ljubljana, Slovenia\\
	\medskip
	
	$^c$ Institute of Mathematics, Physics and Mechanics, Ljubljana, Slovenia\\
	\medskip
	
	$^d$ Faculty of Natural Sciences and Mathematics, University of Maribor, Slovenia\\
	\medskip

	$^e$ Departamento de Matem\'aticas, Universidad de C\'adiz, Algeciras Campus, Spain \\

\end{center}

\begin{abstract}
This work focuses on showing some arguments addressed to dismantle the extended idea about that social networks completely lacks of privacy properties. We consider the so-called active attacks to the privacy of social networks and the counterpart $(k,\ell)$-anonymity measure, which is used to quantify the privacy satisfied by a social network against active attacks. To this end, we make use of the graph theoretical concept of $k$-metric antidimensional graphs for which the case $k=1$ represents those graphs achieving the worst scenario in privacy whilst considering the $(k,\ell)$-anonymity measure. 

As a product of our investigation, we present a large number of computational results stating that social networks might not be as insecure as one often thinks. In particular, we develop a large number of experiments on random graphs which show that the number of $1$-metric antidimensional graphs is indeed ridiculously small with respect to the total number of graphs that can be considered. Moreover, we search on several real networks in order to check if they are $1$-metric antidimensional, and obtain that none of them are such. Along the way, we show some theoretical studies on the mathematical properties of the $k$-metric antidimensional graphs for any suitable $k\ge 1$. In addition, we also describe some operations on graphs that are $1$-metric antidimensional so that they get embedded into another larger graphs that are not such, in order to obscure their privacy properties against active attacks. 
\end{abstract}

\noindent
{\bf Keywords:}  $k$-metric antidimensional graphs; $(k,\ell)$-anonymity measure; privacy in social networks; $k$-metric antidimension; random graphs\\

\noindent
AMS Subj.\ Class.\ (2020):  05C12, 05C69, 05C76, 05C82, 05C85

\section{Introduction}

Nowadays, it is a widely accepted fact that social networks usually provide very low privacy features. The data collected in social media platforms is frequently used in ways that their users are unaware of. Third-parties (like advertisers, for instance) use them and, many times, do not care about who has access to them, and moreover, do not know whether they are also shared with malicious entities. Social networks are targets for cyberattacks, and so, some platforms are often compromised, which leads to some information, like private messages, photos, financial data, etc. could be exposed. One reason for this might be that they indeed lack of a clear transparency. That is, many social media analyzers often do not clearly explain or expose their practices with respect to data-sharing, or make this in a confusing manner that users avoid considering or even reading, and so, they give their consent to such companies to manage their data.

All these facts have created the ``myth'' that social networks present very low or even none privacy properties. However, there might be some arguments that could contribute to decrease these extended myths. In this work, we are precisely focused on presenting a few of these arguments, although we do not exactly state that social networks are secure with respect to the privacy they offer. Instead, we claim that they might be not that weak as it is usually understood. A reason for this is that we are considering only one kind of action among all that ones that malicious entities can perform in order to retrieve sensitive information.

In fact, there are many different styles of processing a social network so that hidden private information can be exposed. Entities dedicated to such malicious actions (frequently called attackers) might develop numerous attacks to a data set to retrieve some information from it. Among such actions, it is probably the most well-known that one called an active attack. Given a social network, an active attack to its privacy is (roughly speaking) an action that an entity can perform on such a graph to control a set of nodes of it, in order to detect or identify some elements of the graph. 
While the objectives of such an attacker may vary widely, the successful execution of malicious actions would inevitably compromise the privacy of users. Consequently, the existence of privacy-preserving methods and privacy measures for social networks is critically necessary.

The $k$-antiresolving sets and the $k$-metric antidimension of graphs were introduced in \cite{Trujillo-Rasua-2016} as the theoretical basis of the privacy measure $(k,\ell)$-anonymity for social networks under active attacks to their privacy. There was specifically stated that a given social network $G$ satisfies $(k,\ell)$-anonymity if $k$ is the smallest integer such that the $k$-metric antidimension of $G$ is at most $\ell$. This can be understood as follows. A given user of the social network $G$ has probability $1/k$ to be identified in such a social network under the assumption that there are $\ell$ attacker nodes in $G$. The number of attacker nodes $\ell$ in a social network is usually statistically assumed, since it is significantly smaller than the number of users of the network.

The privacy measure $(k,\ell)$-anonymity was indeed the first one of its type, and the first attempt into trying to quantify the privacy features that a social network achieves. Some improved variation of this measure was later on published in \cite{Mauw-2019}. Formal definitions of the terms above are as follows. Let $k\in\mathbb{N}$ and let $G=(V(G),E(G))$ be a connected graph.
\begin{itemize}
\item The \textit{metric representation} of a vertex $x\in V(G)$ with respect to an ordered set $S=\{v_1,\dots,v_t\}\subset V(G)$ is the vector $r(x|S)=(d_G(x,v_1),\dots,d_G(x,v_t))$, where $d_G(x,y)$ stands for the standard shortest-path distance between the two vertices $x,y$.
\item A set $S\subset V(G)$ is a $k$-\emph{antiresolving set} ($k$-\emph{ARS} for short) for $G$, if $k$ is the largest integer such that for all $u\notin S$ there exists a set $S_u\subseteq V\setminus (S\cup \{u\})$ with $|S_u|\geq k-1$ and where $r(u|S)=r(x|S)$ for every $x\in S_u$.
\item The $k$-\emph{metric antidimension} of $G$, denoted $\adim_k(G)$, is the cardinality of a smallest $k$-ARS for $G$.
\end{itemize}

The concepts above can be also understood in the following way. Given a vertex set $S\subset V(G)$ of a graph $G$, we define the following equivalence relation $\mathcal{R}_S$, where two vertices $x,y\in V(G)\setminus S$ are related by $\mathcal{R}_S$ if it follows that $r(x|S)=r(y|S)$. In this sense, given a set $S\subset V(G)$, we write $\mathcal{Z}_S=\{Z^1,\dots,Z^r\}$, for some $r\ge 1$, as the set of equivalence classes defined by $\mathcal{R}_S$. With such terminology in mind, we observe that an arbitrary set $S\subseteq V(G)$ is a $k$-ARS of $G$ with $k=\min\{|Z^i|\,:\,Z^i\in \mathcal{Z}_S\}$.

Having the concepts above in mind, it is known that a given social network $G$ achieves $(k,\ell)$-anonymity against active attacks to its privacy if $k$ is the smallest integer such that $\adim_k(G)\le \ell$. It might be then noticed that, in order to quantify the privacy achieved by a given social network $G$, it is necessary to know how to compute the $k$-metric antidimension of $G$. However, such a task might not be so efficiently made, since computing such a parameter for graphs is an NP-hard problem as independently proved in \cite{Chatterjee-2019,Zhang-2017}. In order to contribute to these computations, some bounds, approximations or heuristics are of interest. For instance, an ILP model was developed in \cite{fernandez-2023} that was used to compute the value of $\adim_k(G)$ for some random graphs, as well as, other (not polynomial) algorithms were implemented in \cite{DasGupta-2019} for similar computations. Bounds or closed formulas for this parameter are also known from \cite{fernandez-2023,Kratica-2019,Trujillo-Rasua-2016}.

It is natural to think that there is not a $k$-antiresolving set for every positive integer $k$ in a graph $G$. In this sense, by $\Adim(G)$ we represent the largest integer $k$ for which $G$ contains a $k$-ARS. We shall also say that a graph $G$ is {\em $k$-metric antidimensional} if $\Adim(G)=k$.

It was first noted in \cite{Trujillo-Rasua-2016} that any graph $G$ of maximum degree $\Delta$ satisfies that,
\begin{equation}
    \label{eq:bounds-Adim}
    1\le \Adim(G)\le \Delta(G).
\end{equation}
The equality for both bounds for $\Adim(G)$ occurs in several situations, and the case of the lower bound is of high interest based on the following argument. If $G$ is $1$-metric antidimensional, i.e., $\Adim(G)=1$, then this means that $G$ does not contain any $k$-ARS for every $k\ge 2$, or equivalently, $G$ only contains $1$-antiresolving sets. This is traduced into the fact that for any set of vertices $S\subset V(G)$, there is at least one vertex $v\notin S$ for which there are zero (0) vertices not in $S$ having the same metric representation as $v$, with respect to $S$. In other words, such a vertex is uniquely identified by the vertices in $S$, and so, if an attacker controls any set of vertices of such a graph $G$ (even of cardinality one), then the privacy of at least one vertex will be compromised.

In concordance with these facts, it seems that characterizing graphs $G$ that are $1$-metric antidimensional is worthwhile and even required, so that sensible data will not be made public throughout $1$-metric antidimensional social networks. Some first contributions in this direction were already given in \cite{trujillo-2016}, where all the trees and unicyclic graphs that are $1$-metric antidimensional were fully characterized and polynomial algorithms to detect them were designed.

On the other hand, based on the extended idea that social networks are rather weak with respect to their privacy, one might consider that the amount of such graphs ($1$-metric antidimensional) is very wide, and that, it should be also a reason that contributes to the extended myth regarding the wide weakness of social networks against active attacks to their privacy. The truth is that there is an infinite number of $1$-metric antidimensional graphs, which is already known for trees and unicyclic graphs from \cite{trujillo-2016}. However, as we show in our exposition, the quantity of them is indeed significantly low with respect to the total quantity of social networks that exists.

In addition, even so that there exist an infinite number of graphs that are $1$-metric antidimensional, one can always develop some operations for them, in order to be embedded into some larger graphs that are $k$-metric antidimensional for some $k\ge 2$. With such an embedding, we pretend to obscure the privacy features of $1$-metric antidimensional graphs into another larger graphs that are safer with respect to active attacks to their privacy. Along the way, in this work, we also describe several necessary and/or sufficient conditions for a graph to be $1$-metric antidimensional.

\section{Methodology}\label{sec:method}

The methods we use in our investigation are focused into three issues. First of all, in order to support the claims about breaking some myths on the privacy of social networks, we need a tool that will allow us to check whether a given social network is $1$-metric antidimensional or not. Fortunately, checking this fact is known to be polynomial, as shown in \cite{Chatterjee-2019} for a more general setting. This means that our methods are efficient enough so that a large number of computations can be performed. 

On a second hand, we are also interested into figuring out some structural properties of the graphs that $1$-metric antidimensional. To this end, we first consider some properties of graphs that are $k$-metric antidimensional for some $k\ge 2$, so that we further on focus on those graphs not satisfying these properties, and also separately, study mathematical properties of $1$-metric antidimensional graphs.

Finally, we center our attention into showing some operations that can be performed on a $1$-metric antidimensional graph so that it gets embedded into another graph that is not a $1$-metric antidimensional graph. This also contributes to our claim that social networks indeed do not lack too much of privacy properties. That is, for each $1$-metric antidimensional graph $G$ one can construct another graph $G'$ which is not a $1$-metric antidimensional graph, and such that $G$ is a subgraph of $G'$ and the properties of $G$ are obscured in the graph $G'$. In fact,  an infinite number of such supergraphs of $G$ can be constructed for each graph $1$-metric antidimensional graph $G$.

\subsection{The recognition algorithm}\label{sec:recognition}
Testing whether a given graph $G$ is $1$-metric antidimensional can be done by using the ADIM-1 algorithm provided in  Figure~\ref{alg:ADIM1}. This algorithm works as follows. At line~\ref{l:distances}, the all-pairs distances are computed. Then, the input graph is analyzed starting from each vertex $v$. In detail, the set $S$ is initially set as $S=\{v\}$. This set is intended as a potential $k$-ARS for $G$, for some $k\ge 1$. The while-loop at lines~\ref{l:while}--\ref{l:update-S} is responsible for computing for which $k$ the set $S$ is a $k$-ARS. If $k>1$ (see line~\ref{l:test-k}), $G$ is not $1$-metric antidimensional and hence the algorithm stops returning ``False'' (i.e., $\Adim(G) > 1$) and $k$. In such a case, $k$ is intended as a lower bound for the $k$-metric antidimensionality of $G$. Otherwise, when $k=1$, the algorithm adds all the sets in $\mathcal{Z}_S$ having cardinality $1$ to $S$ (see line~\ref{l:compute-Zi}), and repeats until $S=V(G)$. If the algorithm does not stop after checking all the vertices in $V(G)$, then it returns ``True'' (i.e., $\Adim(G) = 1$).

\begin{algorithm}[h]
\SetKwInput{Proc}{Algorithm}
\Proc{ADIM-1}
\SetKwInOut{Input}{Input}
\Input{An arbitrary graph $G$}
\SetKwInOut{Output}{Output}
\Output{True if and only if $\Adim(G)=1$}

\BlankLine
\BlankLine

    Compute the distance matrix of $G$ in $O\left(n^3\right)$ time using the Floyd-Warshall algorithm~\cite[p. 629]{CLR90} \label{l:distances}\;

    $S\leftarrow\emptyset$
    \tcp*[r]{we assume $S$ is a $k$-ARS for some $k\ge 1$} \label{l:init-S-empty}

    \ForEach{ $v_i\in V(G)$ \label{l:foreach-v} }{
        $S=\left\{ v_i \right\}$ \label{l:init-S-v}\;

        \While{ $\big(V\setminus S\neq\emptyset\big)$ \label{l:while} }{

            compute $k=\min\{|Z^i|\,:\,Z^i\in \mathcal{Z}_S\}$ \label{l:compute-k} \;

            \If{ $k>1$ \label{l:test-k} }{
                \KwRet (False, $k$)
                \tcp*[r]{$k$ is a lower bound for the $k$-metric antidimension of $G$}
             }
            \Else{
                let $Z^{i_1},\dots,Z^{i_\ell}$ be all equivalence classes in $\mathcal{Z}_S$ such that $\left|Z^{i_1}\right|=\dots=\left|Z^{i_\ell}\right|=1$ \label{l:compute-Zi}\;

                $S\,\leftarrow\,S\cup \left( \cup_{t=1}^\ell Z^{i_t} \right)$ \label{l:update-S} \;
            }
        }
    }
    \KwRet True \;
\caption{ Algorithm checking whether $\Adim(G)=1$.}
\label{alg:ADIM1}
\end{algorithm}

Algorithm ADIM-1 might indeed be seen as a special case of~\cite[Algorithm II]{Chatterjee-2019}, when looking for $1$-antidimensional graphs, and we refer the reader to this work for its correctness and computational time, which is $O(n^4)$.

\subsection{Other required terminology}

For an integer $n\ge 1$, we shall write $[n]$ to represent the set of integers $\{1,\dots,n\}$. Let $G$ be a connected graph. The \textit{order} of $G$ will be denoted by $n(G)$. For a given vertex $x\in V(G)$, its \textit{degree} is denoted by $\deg_G(x)$. The \textit{minimum} and \textit{maximum} \textit{degree} of $G$ are $\delta(G)$ and $\Delta(G)$, respectively. The \textit{distance} $d_G(u,v)$ between vertices $u$ and $v$ in $G$ is the usual shortest-path distance. The \emph{eccentricity} $\epsilon_G(u)$ of $u$ is the maximum distance between $u$ and any other vertex of $G$. The \emph{diameter} $\diam(G)$ of $G$ is the maximum of the eccentricities of the vertices of $G$. Also, the \textit{center} of $G$ is the set of vertices $x$ such that $\epsilon_G(x)=\min\{\epsilon_G(v)\,:\,v\in V(G)\}$.  If $x\in V(G)$, then $L_i(x)$ denotes the set of vertices of $G$ at distance $i$ from $x$. Also, as usual, $\kappa(G)$ denotes the (\textit{vertex}) \textit{connectivity} of a graph $G$. 
If $G$ is a graph that satisfies $\kappa(G)  \ge 2$, then $G$ is called a \textit{biconnected graph}. From \cite{trujillo-2016}, graphs $G$ satisfying that $\kappa(G)=1$ and $\Adim(G)=1$ are known (some trees and unicyclic graphs). However, not much more is known about $1$-metric antidimensional graphs $G$ with larger connectivity, i.e., with $\kappa(G)\ge 2$.

The {\em Cartesian product} $G \cp H$ has the vertex set $V(G) \times V(H)$, and vertices $(g,h)$ and $(g',h')$ are adjacent if $g = g'$ and $hh' \in E(H)$, or $h = h'$ and $gg' \in E(G)$. The {\em strong product} $G \boxtimes H$ is obtained from $G\cp H$ by adding to it the edges $(g,h)(g',h')$, where $gg' \in E(G)$ and $hh'\in E(H)$. The {\em lexicographic product} $G \circ H$ also has the vertex set $V(G) \times V(H)$, and vertices $(g,h)$ and $(g',h')$ are adjacent if $g = g'$ and $hh' \in E(H)$, or $gg' \in E(G)$.

\subsection{Plan of the exposition}

Once we have described our methodology in this section, we are then able to present the results of our investigation. In Section \ref{sec-exper}, we present several computational results that support our claim on burning the myth concerning the low privacy of social networks. In particular, we develop a large number of experiments on random graphs which show that the number of $1$-metric antidimensional graphs is indeed ridiculously small with respect to the total number of graphs that can be considered. In addition, we develop in Subsection \ref{subsec-real} some search on several real networks in order to check if they are $1$-metric antidimensional, and obtain that none of them are such. Further on, Sections \ref{sec-k-ARS} and \ref{sec:adim-1} contain some theoretical studies on the mathematical properties of the $k$-ARS of graphs for any suitable $k\ge 1$. Next, Section \ref{sec-increasing-k} is focused on developing some operations on graphs that are $1$-metric antidimensional so that they get embedded into another larger graphs that are not such, in order to obscure their privacy properties against active attacks. Such operations consist on constructing new graphs throughout making use of three of the four classical well--known product graphs, i.e., the lexicographic, the strong and the Cartesian products. We close our exposition with some concluding remarks and possible open problems that can be dealt with as a continuation to busting some more myths on the privacy of social networks.

\section{Experimental evaluation}\label{sec-exper}

In this section, we present the results of an extensive experimental evaluation, whose main objective is to explore the class of $1$-metric antidimensional graphs and to explode the widely extended myth on the weakness of social networks concerning their privacy. Initially, we performed an exhaustive search of these graphs among all graphs with order at most 11. Then, we tested both graphs generated according to well-known random models (i.e., Barab\'{a}si-Albert, Erd\"os-R\'{e}nyi, Erd\"os-R\'{e}nyi-Gilbert) and real-world graphs taken from publicly available repositories. All the experiments have been carried out using the ADIM-1 algorithm described in Subsection~\ref{sec:recognition}. This algorithm and all the testing scripts have been implemented under SageMath, the Sage Mathematics Software System (version 10.5).

\subsection{Exhaustive search}
Table~\ref{tab:exhaustive_search} presents the results obtained by performing an exhaustive search on all graphs $G$ with order between 3 and 11.

\begin{table}[h]
\footnotesize
    \centering
    \begin{tabular}{lrrrrrc}
        \toprule
        $n$ & \textit{total distinct} & \textit{connected} & \textit{found} & \textit{ratio} & \textit{connectivity}  & \textit{max density}\\
        \midrule
        3 & 4 & 2 & 0 & -- & -- & --\\
        4 & 11 & 6 & 1 & 0.166666 & 1: 1  & 0.50 \\
        5 & 34 & 21 & 0 & -- & -- &  -- \\
        6 & 156 & 112 & 1 & 0.008928 & 1: 1  & 0.33 \\
        7 & 1,044 & 853 & 2 & 0.002344 & 1: 2  & 0.33 \\
        8 & 12,346 & 11,117 & 13 & 0.001169 & 1: 13  & 0.35 \\
        9 & 274,668 & 261,080 & 110 & 0.000421 & 1: 110  & 0.38\\
        10 & 12,005,168 & 11,716,571 & 1,894 & 0.000161 & 1: 1,884; 2: 10  & 0.40\\
        11 & 1,018,997,864 & 1,006,700,565 & 52,842 & 0.000052 & 1: 52,505; 2: 337  & 0.42 \\
        \bottomrule
    \end{tabular}
    \caption{Exhaustive search for graphs with order in the range $[3,11]$}
    \label{tab:exhaustive_search}
\end{table}

The first column indicates the order of the graphs, the second column displays the total number of \emph{distinct} (non-isomorphic) graphs for that order, while the third column shows the count of connected graphs among these distinct ones. The subsequent columns detail the following findings: \emph{found} represents the number of 1-metric antidimensional graphs identified, the \emph{ratio} column gives the proportion of \emph{found} graphs to \emph{connected} graphs, and the \emph{connectivity} column specifies the number of found graphs with a given connectivity. Notably, there are no biconnected 1-metric antidimensional graphs for orders less than 10. For order 10, 10 biconnected graphs were found out of 1894 graphs, and for order 11, 237 biconnected graphs were found out of 52842 graphs (these counts are included in the \emph{found} column and further specified in the \emph{connectivity} column). In the performed exhaustive search, no 1-metric antidimensional graphs that are 3-connected were found. The last column shows the \emph{max-density} among the found graphs; given a graph $G$, by density, we mean the ratio between the number of edges of $G$ and maximum number of possible edges, i.e., $n(G)(n(G)-1)/2$.

Notice that there are no graphs of order 3 and 5 that are 1-metric antidimensional. The graphs of order 4 and 6 are the paths $P_4$ and $P_6$, respectively. The two graphs having order 7 are both formed by $P_6$ plus an additional vertex which is adjacent either to the third vertex of the path, or to the second and third vertices of the path. The 10 found graphs of order 10 that are both $1$-metric antidimensional and biconnected are shown in Figure~\ref{fig:bi-connected}.

\begin{figure}[h!]
    \centering
\includegraphics[width=0.60\linewidth]{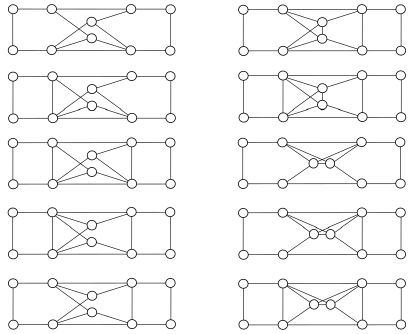}
    \caption{\small The ten biconnected graphs $G$ having $\Adim(G)=1$.}
    \label{fig:bi-connected}
\end{figure}

Table~\ref{tab:exhaustive_search-density} illustrates, for each order $n$ from 6 to 10, the densities at which 1-metric antidimensional graphs were identified. Specifically, for each such density, the table reports the count of 1-metric antidimensional graphs found and the count of all other graphs (i.e., those that are $k$-metric antidimensional, where $k>1$).

\begin{table}[h!]
\footnotesize
    \centering
    \begin{tabular}{ccrr}
        \toprule
        $n$ & \textit{density} & \textit{found} & \textit{others}   \\
        \toprule
        6 &  0.33  & 1  &  5 \\
        \midrule
        7 &  0.28  & 1  & 10 \\
          &  0.33  & 1  & 32 \\
        \midrule
        8 &  0.25  & 2  & 21 \\
          &  0.28  & 3  & 86 \\
          &  0.32  & 6  & 230 \\
          &  0.25  & 2  & 484 \\
        \midrule
        9 &  0.22  & 2  & 45 \\
          &  0.25  & 4  & 236 \\
          &  0.37  & 11 & 786 \\
          &  0.30  & 25 & 2,050 \\
          &  0.33  & 37 & 4,458 \\
          &  0.36  & 24 & 8,380 \\
          &  0.38  & 7  & 13,848 \\
        \midrule
       10 &  0.20  & 6   & 100 \\
          &  0.22  & 14  & 643 \\
          &  0.24  & 24  & 2,654 \\
          &  0.26  & 56  & 8,492 \\
          &  0.28  & 146 & 22,804 \\
          &  0.31  & 298 & 53,565 \\
          &  0.33  & 462 & 112,156 \\
          &  0.35  & 459 & 211,407 \\
          &  0.37  & 288 & 361,054 \\
          &  0.40  & 117 & 560,989 \\
          &  0.42  & 24  & 795,606 \\
        \bottomrule
    \end{tabular}
    \caption{\small Details about some results of the exhaustive search.}
\label{tab:exhaustive_search-density}
\end{table}

\subsection{Testing randomly generated graphs}
As already said, we tested graphs generated according to the well-known Barab\'{a}si-Albert and Erd\"os-R\'{e}nyi random models.

\paragraph{Barab\'{a}si-Albert model.}
We started by using the Barab\'{a}si–Albert model~\cite{BarabasiAlbert-2002}. This model allows the generation of random scale-free networks using a preferential attachment mechanism. Preferential attachment means that the more connected a node is, the more likely it is to receive new links. In particular, $\BarabasiAlbert(n,m)$ mechanism creates a graph with $n$ vertices incrementally: at each step, add one new node, then sample $m$ neighbors among the existing vertices from the network, with a probability that is proportional to the number of links that the existing nodes already have.

We tested graphs with the order in the range [11,100]. For each fixed order $n$, we set $m=2$ and generated 2 millions random graphs. Note that $m=1$ generates only trees, and $m=2$ is recommended to model social networks. Among the 180 millions of generated graphs, we found only 83 1-metric antidimensional graphs (of which, only 68 were distinct). Concerning size and connectivity, details about the found graphs are reported in Table~\ref{tab:barabasi}.

\begin{table}[h!]
\footnotesize
    \centering
    \begin{tabular}{crl}
        \toprule
        $n$ & \textit{found} & \textit{connectivity}  \\
        \midrule
        11 & 18 & 1: 11; 2: 7  \\
        12 & 19 & 1: 8; 2: 11  \\
        13 & 21 & 1: 13; 2: 8  \\
        14 & 9  & 1: 4; 2: 5  \\
        15 & 1  & 1: 1  \\
        \bottomrule
    \end{tabular}
    \caption{\small Details about the 1-metric antidimensional graphs found with the Barab\'{a}si–Albert model.}
    \label{tab:barabasi}
\end{table}

\paragraph{Erd\"os-R\'{e}nyi model.}
We refer to the well-known model introduced by Paul Erd\"os and Alfr\'{e}d R\'{e}nyi in 1959 for generating random graphs~\cite{ErdosRenyi-1959}. In this model, all graphs on a fixed vertex set with a fixed number of edges are equally likely. In particular, in the $G(n,m)$ model, a graph is chosen uniformly at random from the collection of all graphs which have $n$ nodes and $m$ edges.

Regarding the experiments conducted with this model, the substantial required execution time led us to initially generate graphs with an order n in the small range of $[11,16]$. For each fixed value of $n$, we then varied the number of edges $m$ across a wide range, specifically from $2n$ up to $n(n-1)/2$. Subsequently, for every pair of $(n,m)$ values, we generated 1,000,000 random instances.  Results about this preliminary test are reported in Table~\ref{tab:Gnm}.

\begin{table}[h!]
\footnotesize
        \centering
        \begin{tabular}[t]{rr}
            \toprule
            \textit{property} & \textit{value}  \\
            \midrule
            generated & 190,000,000  \\
            connected & 189,120,028   \\
            distinct found & 3,765   \\
            \bottomrule
        \end{tabular}
\quad\quad
        \begin{tabular}[t]{crl}
            \toprule
            $n$ & \textit{found} & \textit{connectivity}  \\
            \midrule
            11 & 198   & 1: 193, 2: 5  \\
            12 & 675   & 1: 668, 2: 7  \\
            13 & 1,006 & 1: 996, 2: 10   \\
            14 & 901   & 1: 892, 2: 9   \\
            15 & 598   & 1: 595, 2: 3   \\
            16 & 387   & 1: 387   \\
            \bottomrule
        \end{tabular}
        \caption{\small Results about graphs generated with the $G(n,m)$ model.}
        \label{tab:Gnm}
\end{table}

Next, we repeated the same type of experiments but for values of $n$ equal to 20, 30, 40 and 50. For such values, the number of found graphs (along with connectivity) is distributed as follows: 20: (1: 254), 30: (1: 444), 40: (1: 199), and 50: (1: 166). It is worth noticing that starting from $n=16$, only graphs with connectivity 1 were found. As a last observation, for each order $n$ tested, Table~\ref{tab:Gnm-density} presents the density of the found 1-metric antidimensional graph having the maximum number of edges.

\begin{table}[h!]
\footnotesize
    \centering
    \begin{tabular}{cccc}
        \toprule
        $n$ & \textit{largest} $m$ &  $n(n-1)/2$ & \textit{density}   \\
        \midrule
        11 & 25  & 55  & 0.45\\
        12 & 31  & 66  & 0.47\\
        13 & 34  & 78  & 0.43\\
        14 & 40  & 91  & 0.44\\
        15 & 45  & 105  & 0.43\\
        16 & 48  & 120  & 0.40\\
        20 & 82  & 190  & 0.43\\
        30 & 95  & 435  & 0.22\\
        40 & 133 & 780  & 0.17\\
        50 & 163 & 1225 & 0.13\\
        \bottomrule
    \end{tabular}
    \caption{\small Additional results about the $G(n,m)$ model. For each $n$, the density of the found 1-metric antidimensional graph having the maximum number of edges is reported.}
    \label{tab:Gnm-density}
\end{table}

\paragraph{Erd\"os-R\'{e}nyi-Gilbert model.}
We refer to the well-known model introduced by Gilbert for generating random graphs. It is also called the Erd\"os–R\'{e}nyi–Gilbert model (see~\cite{Fienberg-2012}) or the $G(n,p)$ model. According to such notation, a graph of order $n$ is constructed by connecting nodes randomly. Each edge is included in the graph with probability $p$, independently from every other edge. Using the $G(n,p)$ model, we performed two kinds of experiments. 

In the first type of experiment, we generated graphs with the order $n$ ranging from 11 to 100. While varying $n$, we initially fixed the probability $p$ at $0.25$. Then, for each order $n$, we generated 2 millions random graphs. After this first phase, we repeated the same approach three times, varying the probability within the set of values $\{0.20, 0.15, 0.10\}$. In summary, for each probability value, we generated a total of 180 millions of random graphs. General results about these experiments are reported in Table~\ref{tab:Gnp-general}, while more detailed findings can be found in Table~\ref{tab:Gnp-details}. 

\begin{table}[h]
\footnotesize
        \centering
        \begin{tabular}{rrrrr}
            \toprule
            \textit{property} & $p=0.25$ & $p=0.20$ & $p=0.15$ & $p=0.10$ \\
            \midrule
            generated 		& 180,000,000 & 180,000,000  & 180,000,000 & 180,000,000 \\
            connected 		& 174,497,428 & 167,659,141  & 152,927,782 & 118,197,366 \\
            total found 		& 32,328      & 58,121       & 59,666      &      23,900 \\
            distinct found 	& 19,077      & 35,286       & 41,866      &      21,004 \\
            \bottomrule
        \end{tabular}
		\quad\quad
        \caption{\small General results about graphs generated with the $G(n,p)$ model.}
        \label{tab:Gnp-general}
\end{table}

\begin{table}[h!]
\footnotesize
\centering
\begin{tabular}{c|rl|rl|rl|rl}
\toprule
    & \multicolumn{2}{c}{$p=0.25$} & \multicolumn{2}{c}{$p=0.20$} & \multicolumn{2}{c}{$p=0.15$} & \multicolumn{2}{c}{$p=0.10$}\\ \cline{2-9} 
$n$ & \textit{found} & \textit{connectivity} & \textit{found} & \textit{connectivity} & \textit{found} & \textit{connectivity} & \textit{found} & \textit{connectivity}\\
\midrule
 11 &  2,571   & 1: 2,494; 2: 7 	& 1378 & 1: 1378 		& 546  & 1: 546	 		&  157 & 1: 157	 	\\
 12 &  4,158   & 1: 4,157, 2: 1 	& 3502 & 1: 3501, 2: 1	& 1741 & 1: 1741		&  359 & 1: 359		\\
 13 &  4,917   & 1: 4,914, 2: 3  & 6285 & 1: 6284, 2: 1	& 3645 & 1: 3645		&  612 & 1: 612		\\
 14 &  3,279   & 1: 3,277, 2: 2 	& 6942 & 1: 6942		& 5155 & 1: 5155		&  849 & 1: 849		\\
 15 &  1,967   & 1: 1,967 		& 5656 & 1: 5656		& 5621 & 1: 5620, 2: 1 	& 1095 & 1: 1095	\\
 16 &  1,123   & 1: 1,123  		& 4103 & 1: 4103		& 5328 & 1: 5328		& 1207 & 1: 1207	\\
 17 &  534     & 1: 534  		& 2720 & 1: 2719, 2: 1 	& 4584 & 1: 4584		& 1289 & 1: 1289    \\
 18 &  308     & 1: 308  		& 1741 & 1: 1741		& 3843 & 1: 3843		& 1400 & 1: 1400    \\
 19 &  150     & 1: 150  		& 1171 & 1: 1171 		& 3023 & 1: 3023 		& 1393 & 1: 1393    \\
 20 &  73      & 1: 73  			&  730 & 1: 730 		& 2318 & 1: 2318 		& 1345 & 1: 1345 	\\
 21 &  38      & 1: 38  			&  413 & 1: 413 		& 1709 & 1: 1709 		& 1327 & 1: 1327 	\\
 22 &  13      & 1: 13  			&  247 & 1: 247 		& 1264 & 1: 1264 		& 1331 & 1: 1331 	\\
 23 &   8      & 1: 8  			&  154 & 1: 154 		&  908 & 1: 908 		& 1167 & 1: 1167 	\\
 24 &   4      & 1: 4  			&  100 & 1: 100 		&  652 & 1: 652 		& 1060 & 1: 1060 	\\
 25 &   3      & 1: 3  			&   59 & 1: 59 			&  470 & 1: 470 		&  988 & 1: 988 	\\
 26 &   1      & 1: 1 			&   32 & 1: 32 			&  324 & 1: 324 		&  886 & 1: 886 	\\
 27 &   0      & -- 				&   25 & 1: 25 			&  231 & 1: 231 		&  765 & 1: 765 	\\
 28 &   0      & -- 				&   11 & 1: 11			&  165 & 1: 165			&  668 & 1: 668		\\ 
 29 &   0      & -- 				&    5 & 1: 5 			&  100 & 1: 100 		&  554 & 1: 554 	\\
 30 &   0      & -- 				&    7 & 1: 7			&   67 & 1: 67			&  501 & 1: 501		\\
 31 &   0      & -- 				&    3 & 1: 3 			&   50 & 1: 50 			&  360 & 1: 360 	\\
 32 &   0      & -- 				&    0 & --			    &   27 & 1: 27			&  325 & 1: 325		\\                        
 33 &   0      & -- 				&    1 & 1: 1			&   34 & 1: 34			&  248 & 1: 248		\\
 34 &   0      & -- 				&    0 & --			    &   18 & 1: 18			&  199 & 1: 199		\\
 35 &   0      & -- 				&    0 & --			    &    8 & 1: 8			&  169 & 1: 169		\\
 36 &   0      & -- 				&    0 & --			    &   11 & 1: 11			&  145 & 1: 145		\\
 37 &   0      & -- 				&    0 & --			    &    8 & 1: 8			&  114 & 1: 114		\\
 38 &   0      & -- 				&    0 & --			    &    7 & 1: 7			&   94 & 1: 94		\\                   
 39 &   0      & -- 				&    1 & 1: 1 			&    2 & 1: 2 			&   87 & 1: 87 		\\            
 40 &   0      & -- 				&    0 & --			    &    3 & 1: 3			&   66 & 1: 66		\\                   
 41 &   0      & -- 				&    0 & --			    &    1 & 1: 1			&   49 & 1: 49		\\                   
 42 &   0      & -- 				&    0 & --			    &    1 & 1: 1			&   43 & 1: 43		\\                   
 43 &   0      & -- 				&    0 & --			    &    0 & --				&   38 & 1: 38		\\                   
 44 &   0      & -- 				&    0 & --			    &    0 & --				&   33 & 1: 33		\\                   
 45 &   0      & -- 				&    0 & --			    &    1 & 1: 1			&   18 & 1: 18		\\  
 46 &   0      & -- 				&    0 & --			    &    0 & --				&   15 & 1: 15		\\  
 47 &   0      & -- 				&    0 & --			    &    0 & --				&   19 & 1: 19		\\  
 48 &   0      & -- 				&    0 & --			    &    0 & --				&    6 & 1: 6		\\  
 49 &   0      & -- 				&    0 & --			    &    0 & --				&   11 & 1: 11		\\                 
 50 &   0      & -- 				&    0 & --			    &    1 & 1: 1			&    3 & 1: 3		\\                                            
 51 &   0      & -- 				&    0 & --			    &    0 & --				&    2 & 1: 2		\\                                            
 52 &   0      & -- 				&    0 & --			    &    0 & --				&    2 & 1: 2		\\                                            
 54 &   0      & -- 				&    0 & --			    &    0 & --				&    3 & 1: 3		\\                                            
 55 &   0      & -- 				&    0 & --			    &    0 & --				&    2 & 1: 2		\\                                             
  \bottomrule
\end{tabular}
\caption{\small Detailed results about graphs generated with the $G(n,p)$ model.}
\label{tab:Gnp-details}
\end{table}

Concerning Table~\ref{tab:Gnp-details}, it is interesting to note that for $n>55$, no 1-metric antidimensional graphs were found regardless of the used probability. Moreover, the connectivity is still limited to 2, with a minimal number of cases found up to $n=17$. Due to the limited number of graphs found as $n$ increases, we repeated the same experiment for the specific value of $n=30$ and $p=0.25$, extending to 50 millions the number of generated graphs. With this kind of input, we found only one graph that is 1-metric antidimensional (and the connectivity of this graph is still 1). 

In the second kind of experiments, we varied $n$ in the range $[11,50]$, but this time we used a specific value of $p$ for each value of $n$. As it is established that a graph in $G(n,p)$ is almost surely connected when $p > \frac{ (1+\varepsilon )\ln n }{n}$, we choose $p = \frac{ 1.001\ln n }{n}$. The number of graphs found for each value of $n$ resulting from this experiment are reported as charts in Figure~\ref{fig:Gnp-varying-p}.

\begin{figure}[h]
\centering
\includegraphics[width=0.80\textwidth]{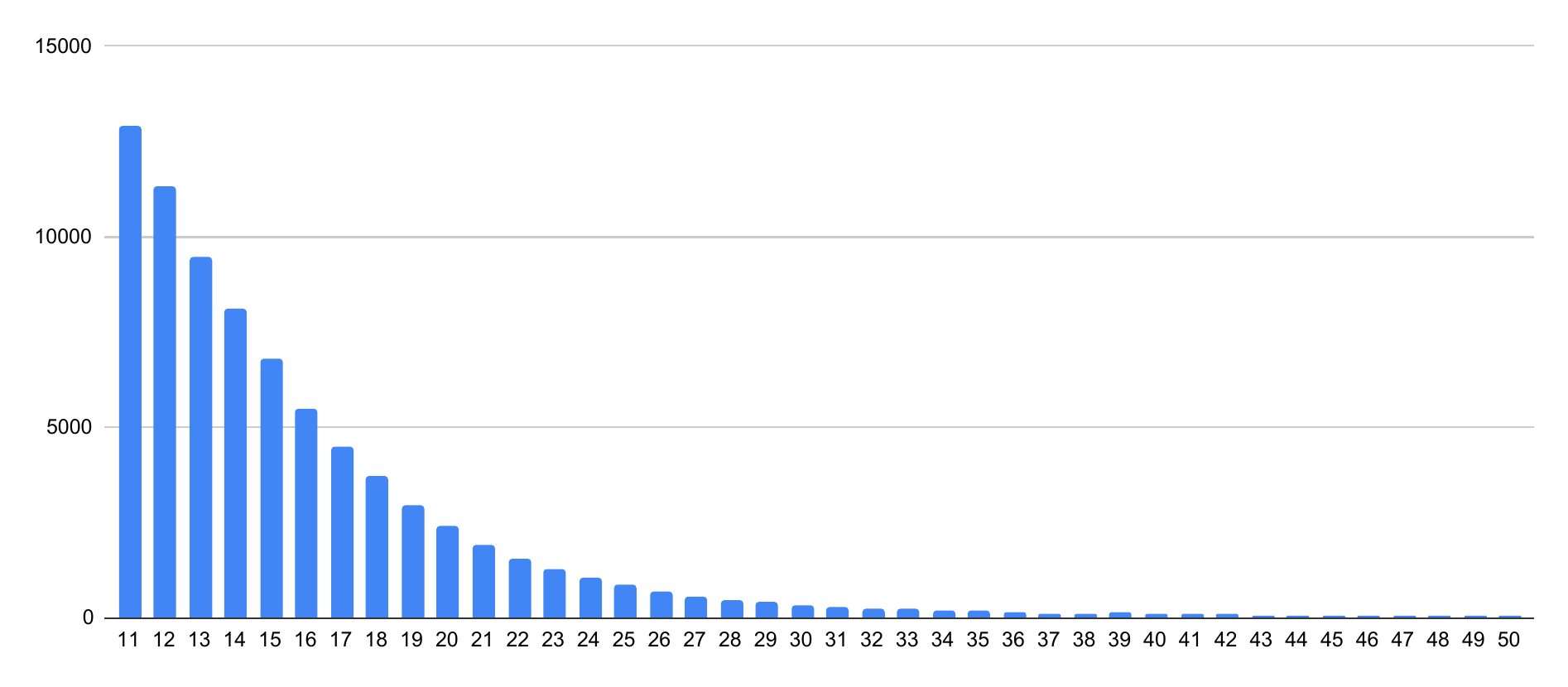}
\includegraphics[width=0.80\textwidth]{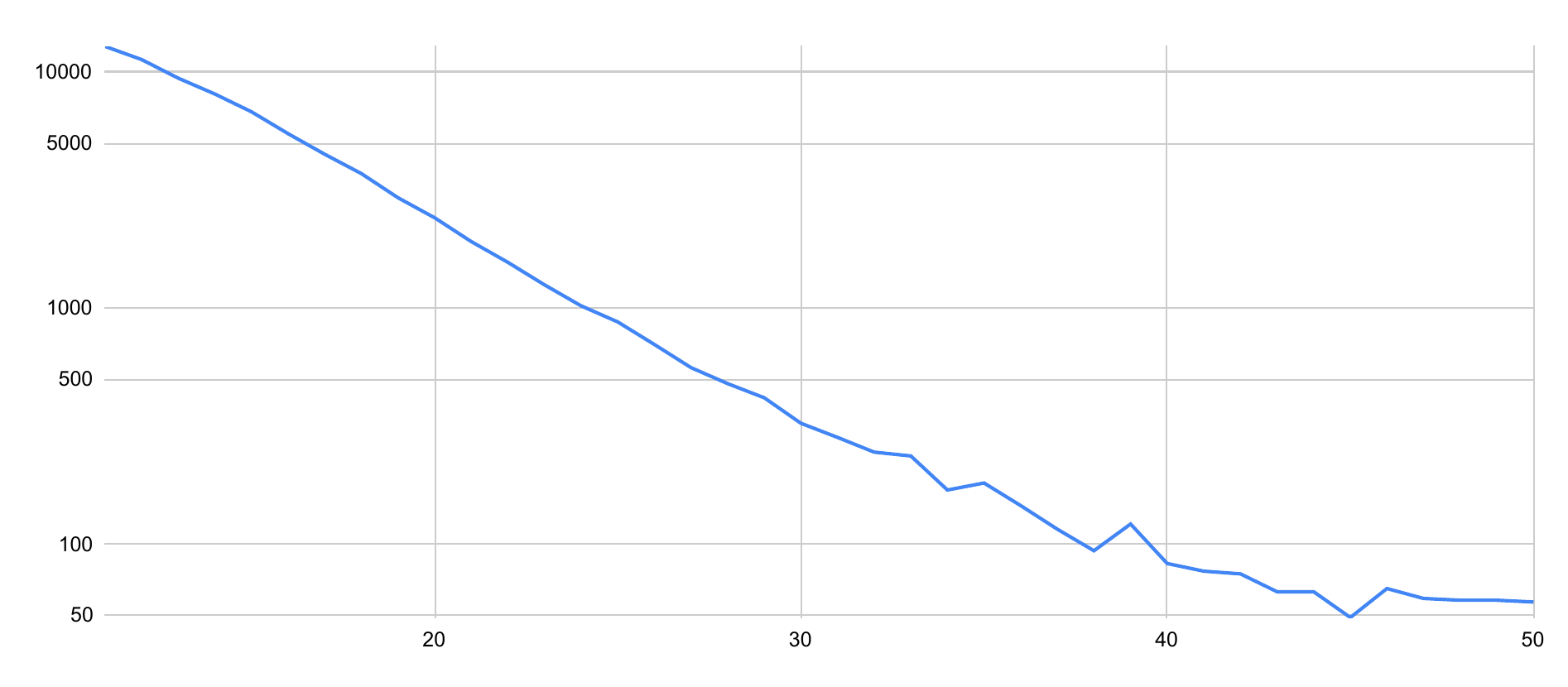}
\caption{\small Results about the second kind of experiments with the $G(n,p)$ model, where $n\in [11,50]$ and $p = \frac{ 1.001\ln n }{n}$. Notice that the second chart reports the same data but as a line chart with log scale on the vertical axis.}
\label{fig:Gnp-varying-p}
\end{figure}

\subsection{Testing real world networks}\label{subsec-real}
We report some results of tests performed on real networks, using some publicly available datasets.
The first dataset is named \emph{Gemsec Facebook}~\cite{RozemberczkiEtAl-2019}. This dataset, obtained from the Stanford Network Analysis Project (SNAP) at \url{https://snap.stanford.edu/}, comprises data collected from Facebook pages. Specifically, it includes 8 graphs, each representing the network of verified Facebook pages within a different category. In these graphs, nodes represent the pages, and edges indicate mutual likes between them. To enhance anonymity, the nodes have been reindexed.

The second dataset refers to some Ego networks from Facebook, that is, networks given by the connections among nodes sharing a common neighbor (the ego node). They are also provided by SNAP (\url{https://snap.stanford.edu/data/ego-Facebook.html}). The network ``facebook combined" combines the data of ten Ego networks.

The last three datasets came from the Network Repository Project, an initiative that
collects hundreds of network datasets (\url{https://networkrepository.com/}). The first of them refers to Facebook networks and the second to email networks. As a further dataset, we decided to test the 1-metric antidimensionality for non-social networks:  we chose some power networks from \url{https://networkrepository.com/power.php}.

Table~\ref{tab:real-networks} presents details about some networks of the used datasets.  Notice that most of them are 1-connected graphs. Each graph in the table has been tested to verify if it is 1-metric antidimensional. As a final result, none of them was found to be 1-metric antidimensional.

\begin{table}[t]
\footnotesize
    \centering
    \begin{tabular}{lcrrrrc}
\toprule
\textit{category} & \textit{network} &  $n$  & $m$ & \textit{density} & $\Delta$ & $\delta$ \\
\midrule
Gemsec FB~\cite{RozemberczkiEtAl-2019} & government & 7057 & 89455 & 0.00359 & 697 & 1  \\
& athletes & 13866 & 86858 & 0.00090  & 468 & 1  \\
& politician  & 5908 & 41729 & 0.00239 & 323 & 1  \\
& public figure  & 11565 & 67114 & 0.00100 & 326 & 1  \\
& company  & 14113 & 52310 & 0.00053 & 215 & 1  \\
& artist  & 50515 & 819306 & 0.00064 & 1469 & 1  \\
& tvshow  & 3892 & 17262 & 0.00228 & 126 & 1  \\
& new sites  & 27917 & 206259 & 0.00053 & 678 & 1  \\
\midrule
Ego FB~\cite{McAuley-2012} & facebook combined & 4039 & 88234 & 0.01082 & 1045 & 1  \\
& 107 & 1034 & 26749 & 0.05009 & 253 & 1\\
& 348 & 224 & 3192 & 0.12780 & 99 & 1\\
& 686 & 168 & 1656 & 0.11805 & 77 & 1\\
\midrule
Social FB~\cite{Rossi-2015} &socfb-Auburn71 & 18448 & 973918 & 0.00572 & 5160 & 1  \\
&socfb-CMU & 6621 & 249959 & 0.01141 & 840 & 1  \\
&socfb-Amherst41 & 2235 & 90954 & 0.03643 & 467 & 1  \\
\midrule
Email~\cite{Rossi-2015} & email-enron-only & 143 & 623 & 0.06136 & 42 & 1  \\
& email-univ & 1133 & 5451 & 0.00850 & 71 & 1  \\
& email-enron-large & 33696 & 180811 & 0.00032 & 1383 & 1  \\
& email-EU & 32430 & 54397 & 0.00010 &  623 & 1 \\
\midrule
Power Network~\cite{Rossi-2015} & power-US-Grid & 4941 & 6594 & 0.00054 & 19 & 1  \\
& power-bcspwr10 & 5300 & 13571 & 0.00100 & 15 & 3  \\
& power-bcspwr09 & 1723 & 4117 & 0.00278 & 16 & 3  \\
\bottomrule
\end{tabular}
    \caption{Data about real network datasets}
    \label{tab:real-networks}
\end{table}

\section{Mathematical properties of $k$-ARS for larger values of $k$}\label{sec-k-ARS}

In order to find some mathematical properties of $k$-ARS in a graph, in this section, we connect the existence of $k$-ARS with some classical areas of graph theory, including vertex connectivity and modular decompositions. In addition, the results of this section shall be further used, while we deal with increasing the privacy properties of graphs that are $1$-metric antidimensional.

\subsection{Vertex connectivity}

As we next show, the (vertex) connectivity of graphs can be used to bound the value $\Adim(G)$ for a given graph $G$. The next concept can also be used for a similar purpose. If $x$ is a vertex of a connected graph $G$, then we set
$$\#\epsilon_G(x) = |\{y:\ d_G(x,y) = \epsilon_G(x)\}|\,$$
and
$$\#\epsilon(G) = \max_{x\in V(G)} \#\epsilon_G(x)\,.$$
Note that if $\diam(G) = 2$, then $\#\epsilon_G(x) = n(G) - \deg_G(x) - 1$ which in turn implies that
\begin{equation}
\label{eq:sharp-ecc-diam-2}
\#\epsilon(G) = n(G) - \delta(G) - 1\,.
\end{equation}

\begin{lemma}
\label{lem:k-connected}
If $G$ is a connected graph, then
$$\Adim(G) \ge \min \{\kappa(G), \#\epsilon(G)\}\,.$$
Moreover, if $k = \min \{\kappa(G), \#\epsilon(G)\}$, then $\adim_k(G) = 1$.
\end{lemma}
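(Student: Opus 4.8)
The plan is to realize the lower bound by a single-vertex antiresolving set, using the fact that $\Adim(G)\ge k'$ as soon as $G$ admits any $k'$-ARS. So it suffices to produce one vertex $v$ for which $\{v\}$ is a $k'$-ARS with $k'\ge\min\{\kappa(G),\#\epsilon(G)\}$. First I would pick $v$ with $\#\epsilon_G(v)=\#\epsilon(G)$ and set $S=\{v\}$. With a single reference vertex the metric representation of $x$ is just $d_G(x,v)$, so the equivalence classes of $\mathcal{R}_S$ are exactly the distance levels $L_1(v),\dots,L_{\epsilon_G(v)}(v)$; by the characterization recalled in the introduction, $\{v\}$ is then a $k'$-ARS with $k'=\min_{1\le i\le \epsilon_G(v)}|L_i(v)|$.

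The heart of the argument is to bound these level sizes from below. I would first show that every intermediate level is a vertex cut: for $1\le i\le \epsilon_G(v)-1$ there are vertices both closer to $v$ (namely $v$ itself) and farther from $v$ (since $i<\epsilon_G(v)$), while every edge of $G$ joins two vertices whose distances from $v$ differ by at most one. Hence there is no edge between $\bigcup_{j<i}L_j(v)$ and $\bigcup_{j>i}L_j(v)$, so deleting $L_i(v)$ separates $v$ from the deeper levels; this forces $|L_i(v)|\ge\kappa(G)$ for every intermediate $i$. The outermost level satisfies $|L_{\epsilon_G(v)}(v)|=\#\epsilon_G(v)=\#\epsilon(G)$ by the choice of $v$. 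Combining these, $k'=\min_i|L_i(v)|\ge\min\{\kappa(G),\#\epsilon(G)\}=k$, which gives $\Adim(G)\ge k$.

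For the ``moreover'' part I would argue that the bound is attained by a singleton, i.e.\ that some single vertex is a $k$-ARS \emph{exactly}. When $\#\epsilon(G)\le\kappa(G)$ this is immediate from the same $v$: every intermediate level has size $\ge\kappa(G)\ge\#\epsilon(G)$ while the last level has size exactly $\#\epsilon(G)$, so $k'=\#\epsilon(G)=k$ and $\adim_k(G)=1$. The case $\kappa(G)<\#\epsilon(G)$ is what I expect to be the main obstacle, since the vertex maximizing $\#\epsilon_G$ may have all of its levels strictly larger than $\kappa(G)$, and then $\{v\}$ is a $k'$-ARS with $k'>k$ rather than a $k$-ARS. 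To handle this case I would abandon the extremal-$\#\epsilon$ vertex and instead start from a minimum vertex cut $C$ with $|C|=\kappa(G)$, use the standard fact that in a minimum cut every vertex of $C$ has neighbours on both sides, and try to choose a source vertex deep inside one side so that $C$ (or at least some cut of size $\kappa(G)$) materializes as a single BFS level while all remaining levels stay no smaller. The delicate point is controlling the \emph{exact} size of the realized level rather than merely bounding it below by $\kappa(G)$; this is where the construction must be engineered carefully, and it is the step I would scrutinize most, since a careless choice of source vertex genuinely fails to produce a level of size exactly $\kappa(G)$.
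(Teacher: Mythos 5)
Your proof of the displayed inequality is exactly the paper's argument: take $v$ with $\#\epsilon_G(v)=\#\epsilon(G)$, note that the equivalence classes of $\{v\}$ are the distance levels $L_i(v)$, that each intermediate level is a vertex cut and hence has size at least $\kappa(G)$, and that the outermost level has size exactly $\#\epsilon(G)$. That part is complete and correct, as is your observation that when $\#\epsilon(G)\le \kappa(G)$ the same singleton is exactly a $k$-ARS, which settles the ``moreover'' clause in that subcase.

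The step you flag as delicate --- producing a singleton that is \emph{exactly} a $k$-ARS when $\kappa(G)<\#\epsilon(G)$ --- is not merely delicate; it cannot be done in general, so no engineering of the source vertex around a minimum cut will rescue it. Consider the bowtie graph, two triangles glued at a vertex $c$: here $\kappa(G)=1$ and $\#\epsilon(G)=4$ (realized by $c$), so $k=1$, yet every singleton has all of its distance levels of size at least $2$ (from $c$ there is one level of size $4$; from any other vertex the two levels have size $2$). Hence no singleton is a $1$-ARS, and in fact $\adim_1(G)=2$. The paper's own proof silently commits exactly the step you were worried about: from ``each intermediate level has size $\ge\kappa(G)$ and the last level has size $\#\epsilon(G)$'' it concludes that $\{x\}$ \emph{is} a $\min\{\kappa(G),\#\epsilon(G)\}$-ARS, whereas under the paper's definition (where $k$ is the \emph{largest} integer with the stated property, i.e.\ the exact minimum class size) the level bounds only show that $\{x\}$ is a $k'$-ARS for some $k'\ge\min\{\kappa(G),\#\epsilon(G)\}$. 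That is enough for the inequality $\Adim(G)\ge\min\{\kappa(G),\#\epsilon(G)\}$ but not for the ``moreover'' clause as literally stated. So your proposal is incomplete precisely where the paper's proof is too optimistic; the defensible conclusion is that $\adim_{k'}(G)=1$ for some $k'\ge k$, with $k'=k$ guaranteed only when $\#\epsilon(G)\le\kappa(G)$.
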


\begin{proof}
Let $x\in V(G)$ be a vertex with $\#\epsilon_G(x) = \#\epsilon(G)$. Since each $L_i(x)$, $i\in [\epsilon_G(x)-1]$, is a cut set, we have $|L_i(x)| \ge \kappa(G)$ for every $i\in [\epsilon_G(x)-1]$. Since $\#\epsilon_G(x) = \#\epsilon(G)$ we also have $|L_{\epsilon_G(x)}(x)| = \#\epsilon(G)$. It follows that $\{x\}$ is a $\min \{\kappa(G), \#\epsilon(G)\}$-ARS which implies both assertions of the lemma.
\end{proof}

Lemma~\ref{lem:k-connected} implies that if $G$ is a graph with $\Adim(G) = 1$ and $\kappa(G)\ge 2$, then for every vertex $x$ there exists a unique vertex $y$ such that $d_G(x,y) = \epsilon_G(x)$. That is, the following holds.

\begin{corollary}
\label{cor:ecc-to-be-1}
If $G$ is a graph with $\Adim(G) = 1$ and $\kappa(G)\ge 2$, then $\#\epsilon(G) = 1$.
\end{corollary}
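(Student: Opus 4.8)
The plan is to read this off directly from Lemma~\ref{lem:k-connected}, since the corollary is essentially a special case of its first inequality. I would start from the lower bound $\Adim(G) \ge \min\{\kappa(G), \#\epsilon(G)\}$ and substitute the hypothesis $\Adim(G) = 1$, which immediately yields $\min\{\kappa(G), \#\epsilon(G)\} \le 1$.

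Next I would use the second hypothesis $\kappa(G) \ge 2$ to locate where this minimum is attained. Since $\kappa(G) \ge 2 > 1$, the term $\kappa(G)$ cannot be the one bounded above by $1$; hence the minimum must be realized by the other term, forcing $\#\epsilon(G) \le 1$. This is the only place where both hypotheses are used, and it is a one-line case analysis rather than a genuine obstacle.

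To close, I would observe that $\#\epsilon(G) \ge 1$ holds trivially for every connected graph: for any vertex $x$ the eccentricity $\epsilon_G(x)$ is by definition achieved by some vertex $y$, so $\#\epsilon_G(x) \ge 1$, and taking the maximum over $x$ gives $\#\epsilon(G) \ge 1$. Combining $\#\epsilon(G) \le 1$ with $\#\epsilon(G) \ge 1$ produces $\#\epsilon(G) = 1$, as claimed.

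I do not anticipate any real difficulty here: the content is entirely carried by Lemma~\ref{lem:k-connected}, and the corollary is just the contrapositive-flavored reformulation that says a biconnected graph with $\Adim(G)=1$ must have every vertex realizing its eccentricity at a \emph{unique} farthest vertex. The only subtlety worth stating explicitly is the harmless lower bound $\#\epsilon(G)\ge 1$, which prevents the conclusion from degenerating.
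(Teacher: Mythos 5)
Your argument is correct and is exactly the route the paper takes: the corollary is read off from the lower bound $\Adim(G)\ge\min\{\kappa(G),\#\epsilon(G)\}$ of Lemma~\ref{lem:k-connected}, with $\kappa(G)\ge 2$ forcing the minimum onto $\#\epsilon(G)$ and the trivial bound $\#\epsilon(G)\ge 1$ closing the gap. Nothing is missing.
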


The next consequence of Lemma~\ref{lem:k-connected} also deserves special attention.

\begin{corollary}
\label{cor:kappa-regular-diam-2}
If $G$ is a $\kappa(G)$-regular graph with $\diam(G) = 2$ and $\kappa(G) \le  \frac{n(G)-1}{2}$, then $\Adim(G) = \kappa(G)$.
\end{corollary}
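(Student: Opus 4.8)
The plan is to sandwich $\Adim(G)$ between the two bounds already at our disposal and observe that the density hypothesis forces them to coincide at $\kappa(G)$. For the upper bound I would invoke~\eqref{eq:bounds-Adim}: since $G$ is $\kappa(G)$-regular we have $\Delta(G)=\kappa(G)$, so $\Adim(G)\le\kappa(G)$ immediately.

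For the matching lower bound I would apply Lemma~\ref{lem:k-connected}, which gives $\Adim(G)\ge\min\{\kappa(G),\#\epsilon(G)\}$. The crux is to show that under the stated hypotheses this minimum is attained by $\kappa(G)$, i.e.\ that $\#\epsilon(G)\ge\kappa(G)$. Because $\diam(G)=2$ and $G$ is regular of degree $\kappa(G)$ (so that $\delta(G)=\kappa(G)$), identity~\eqref{eq:sharp-ecc-diam-2} yields $\#\epsilon(G)=n(G)-\kappa(G)-1$. The assumed inequality $\kappa(G)\le\frac{n(G)-1}{2}$ is then exactly equivalent to $2\kappa(G)\le n(G)-1$, hence to $\kappa(G)\le n(G)-1-\kappa(G)=\#\epsilon(G)$. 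Therefore $\min\{\kappa(G),\#\epsilon(G)\}=\kappa(G)$, and Lemma~\ref{lem:k-connected} delivers $\Adim(G)\ge\kappa(G)$. Combining the two bounds gives $\Adim(G)=\kappa(G)$.

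The one technical point I would check before quoting~\eqref{eq:sharp-ecc-diam-2} is that every vertex truly has eccentricity $2$, since the count $\#\epsilon_G(x)=n(G)-\deg_G(x)-1$ tacitly assumes this: as $\kappa(G)\le\frac{n(G)-1}{2}<n(G)-1$, no vertex is adjacent to all the others, so no eccentricity equals $1$, and with $\diam(G)=2$ each eccentricity is exactly $2$, which validates the formula. Beyond this small verification there is no real obstacle—the whole argument is a direct combination of~\eqref{eq:bounds-Adim}, Lemma~\ref{lem:k-connected}, and~\eqref{eq:sharp-ecc-diam-2}. The only thing requiring care is the algebraic translation of the density hypothesis $\kappa(G)\le\frac{n(G)-1}{2}$ into the inequality $\#\epsilon(G)\ge\kappa(G)$ that pins the minimum down to $\kappa(G)$.
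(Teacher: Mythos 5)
Your proposal is correct and follows essentially the same route as the paper: the upper bound from~\eqref{eq:bounds-Adim} via $\Delta(G)=\kappa(G)$, and the lower bound by combining Lemma~\ref{lem:k-connected} with~\eqref{eq:sharp-ecc-diam-2} and the hypothesis $\kappa(G)\le\frac{n(G)-1}{2}$ to show the minimum equals $\kappa(G)$. Your extra check that every eccentricity is exactly $2$ is a sensible (if minor) addition that the paper leaves implicit.
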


\begin{proof}
Combining Lemma~\ref{lem:k-connected}, Eq.~\eqref{eq:sharp-ecc-diam-2}, and the assumption $\kappa(G) \le \frac{n(G)-1}{2}$, we get
\begin{align*}
\Adim(G)
& \ge \min \{\kappa(G), \#\epsilon(G)\} \\
& =
\min \{\kappa(G), n(G) - \delta(G) - 1\} \\
& = \min \{\kappa(G), n(G) - \kappa(G) - 1\} \\
& = \kappa(G)\,.
\end{align*}
On the other hand, by \eqref{eq:bounds-Adim} we have $\Adim(G) \le \Delta(G) = \kappa(G)$, and we are done.
\end{proof}

For example, if $P$ denotes the Petersen graph $P$, then Corollary~\ref{cor:kappa-regular-diam-2} yields $\Adim(P) = 3$.

\subsection{Modular decompositions}

The modular decomposition is a decomposition of a graph into subsets of vertices called modules. Given a graph $G$, a set $M\subseteq V(G)$ is a {\em module} of $G$ if the vertices of $M$ cannot be distinguished by any vertex in $V(G)\setminus M$, that is $N(u)\setminus M = N(v)\setminus M$ for each $u,v\in M$. For example, $\emptyset$, $V(G)$ and all the singletons $\{v\}$ for each $v\in V(G)$ are modules,  they are called \emph{trivial modules}. A graph is \emph{prime} if all its modules are trivial.

Given two modules of $G$, either they are disjoint or one is included in the other. This property leads to a recursive decomposition of a graph, called \emph{modular decomposition}, that refers to the process whereby an entire graph is decomposed; at any stage of the process, the current subgraph being decomposed will be a module of the original graph. Each of these subgraphs is decomposed recursively. This process continues until all the subgraphs being decomposed contain only a single vertex. The modular decomposition of a graph can be computed in linear time  (there are various algorithms, e.g. see~\cite{habib-2010}).

\begin{proposition}\label{prop:module}
If $M$ is a $($largest$)$ module of $G$, then $\Adim(G) \ge |M|$. In particular, if $\Adim(G) = 1$, then $G$ is prime.
\end{proposition}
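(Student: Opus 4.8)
The plan is to use the defining property of a module---that every vertex outside $M$ ``sees'' all of $M$ in the same way---to show that the set $S = V(G)\setminus M$ forces all of $M$ into a single class of the equivalence relation $\mathcal{R}_S$, thereby exhibiting $S$ as an $|M|$-ARS and giving $\Adim(G)\ge |M|$ at once.

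First I would upgrade the module condition from a statement about adjacency to one about distance, which I expect to be the only real obstacle. The claim to prove is: if $M$ is a module of $G$ and $w\in V(G)\setminus M$, then $d_G(w,u)=d_G(w,v)$ for all $u,v\in M$. The condition $N(u)\setminus M=N(v)\setminus M$ says precisely that each $w\notin M$ is adjacent to all of $M$ or to none of it. Fix $u,v\in M$ and a shortest $w$--$u$ path $w=p_0,p_1,\dots,p_\ell=u$. Let $i\ge 1$ be the smallest index with $p_i\in M$; then $p_{i-1}\notin M$ is adjacent to $p_i\in M$, so by the module property $p_{i-1}$ is adjacent to $v$ as well. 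Replacing $p_i,\dots,p_\ell$ by $v$ gives a $w$--$v$ walk of length $i\le \ell$, so $d_G(w,v)\le d_G(w,u)$; symmetry yields equality. (Connectedness of $G$ makes all these distances finite.)

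Granting the claim, I would set $S=V(G)\setminus M$, which is a proper subset whenever $M\ne V(G)$. By the claim every vertex of $S$ is equidistant from all of $M$, so all vertices of $M$ share the same metric representation $r(\,\cdot\,|S)$. Since $V(G)\setminus S=M$, the relation $\mathcal{R}_S$ has a single equivalence class, namely $M$ itself, i.e. $\mathcal{Z}_S=\{M\}$ and $\min\{|Z^i|:Z^i\in\mathcal{Z}_S\}=|M|$. Hence $S$ is an $|M|$-ARS, and by the definition of $\Adim$ we conclude $\Adim(G)\ge |M|$.

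The ``in particular'' statement then follows by contraposition: if $G$ is not prime it admits a non-trivial module $M$ with $2\le |M|\le n(G)-1$, and the first part gives $\Adim(G)\ge |M|\ge 2$, contradicting $\Adim(G)=1$. Equivalently, $\Adim(G)=1$ forces every module to be trivial, so $G$ is prime.
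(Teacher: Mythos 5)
Your proof is correct and takes essentially the same route as the paper: both exhibit $S = V(G)\setminus M$ as an $|M|$-ARS, from which $\Adim(G)\ge |M|$ and the primality consequence follow immediately. The only difference is that you explicitly prove the step the paper compresses into ``by definition of module,'' namely that the adjacency condition $N(u)\setminus M = N(v)\setminus M$ upgrades to equality of distances $d_G(w,u)=d_G(w,v)$ for every $w\notin M$; your path-surgery argument for this is sound.
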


\begin{proof}
By definition of module, $V(G)\setminus M$ is an $|M|$-ARS of $G$, hence $\Adim(G) \ge |M|$. If follows that if $G$ contains a non-trivial module $M\subseteq V(G)$, then $\Adim(G) \ge |M| > 1$.
\end{proof}

As already observed, given any graph $G$, it is possible to compute its modular decomposition in linear time. Concerning the result, if we get some non-trivial module, then $\Adim(G) \ge |M|$, where $M$ is any largest non-trivial module. Conversely, if $G$ is prime, we have no information about $\Adim(G)$. For instance, the path graph $P_4$ is prime and $\Adim(P_4)=1$, whereas $P_5$ is prime as well, but $\Adim(P_5)=2$.

\subsection{Diameter two graphs}

As we next show, for a graph to be $1$-metric antidimensional it is necessary to have a ``large'' diameter. 

\begin{theorem}
\label{thm:diam-2}
If $\diam(G) = 2$, then $\Adim(G) \ge 2$.
\end{theorem}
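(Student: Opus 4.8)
The plan is to prove the contrapositive in a constructive way: assuming $\diam(G)=2$, I would exhibit, for \emph{every} vertex set $S$, some $k\ge 2$ antiresolving structure, or more precisely show that no set $S$ can be a $1$-ARS. Recall that $S$ is a $1$-ARS exactly when some equivalence class of $\mathcal{R}_S$ is a singleton; hence $\Adim(G)\ge 2$ is equivalent to saying that \emph{for every} $S\subseteq V(G)$, every equivalence class $Z^i\in\mathcal{Z}_S$ has $|Z^i|\ge 2$. So the goal reduces to: whenever $\diam(G)=2$, no vertex outside $S$ is uniquely identified by its distance vector to $S$.

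The key observation I would exploit is that when $\diam(G)=2$, the metric representation $r(u\mid S)$ takes only the values $1$ and $2$ in each coordinate (and $0$ is impossible since $u\notin S$). Thus $r(u\mid S)$ is completely determined by the set $N(u)\cap S$: coordinate $v_j\in S$ equals $1$ if $v_j\in N(u)$ and equals $2$ otherwise. Consequently two vertices $u,x\notin S$ satisfy $r(u\mid S)=r(x\mid S)$ if and only if $N(u)\cap S = N(x)\cap S$. The problem therefore becomes purely combinatorial about neighborhoods intersected with $S$, and I must show that for any $S$ there is no vertex $u\notin S$ whose trace $N(u)\cap S$ is unique among all vertices of $V(G)\setminus S$.

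The main work — and the step I expect to be the principal obstacle — is ruling out the existence of such a uniquely-traced vertex. The natural strategy is a minimal-counterexample or an averaging/pigeonhole argument: suppose $u$ is uniquely identified by $S$, i.e.\ $N(u)\cap S$ differs from $N(x)\cap S$ for all other $x\notin S$; I would then try to modify $S$ (delete a well-chosen vertex, or swap one in) to produce a genuine singleton class, contradicting something, or directly force a second vertex sharing $u$'s trace. A cleaner route may be to invoke the earlier machinery: by Lemma~\ref{lem:k-connected}, $\Adim(G)\ge\min\{\kappa(G),\#\epsilon(G)\}$, and by Eq.~\eqref{eq:sharp-ecc-diam-2} we have $\#\epsilon(G)=n(G)-\delta(G)-1$ when $\diam(G)=2$. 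If $G$ is $2$-connected this often already gives $\Adim(G)\ge 2$, so the hard residual case is when $\kappa(G)=1$, i.e.\ $G$ has a cut vertex yet still diameter $2$. I would handle that case separately, analyzing the cut vertex $c$: every component of $G-c$ attaches only through $c$, and diameter $2$ forces each such component to be within distance $2$, giving strong structural constraints (for instance each component together with $c$ has small diameter) that I can leverage to always locate two vertices with identical trace on any $S$.

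Concretely I would organize the proof as: (1) reduce $\Adim(G)\ge2$ to ``no singleton class for any $S$'' and translate representations into neighborhood traces using $\diam(G)=2$; (2) dispose of the $\kappa(G)\ge2$ case via Lemma~\ref{lem:k-connected} together with Eq.~\eqref{eq:sharp-ecc-diam-2}, checking that $\min\{\kappa(G),\#\epsilon(G)\}\ge2$ unless the graph is degenerately small, and treat the few tiny exceptions by hand; (3) in the remaining case $\kappa(G)=1$, fix a cut vertex and an arbitrary $S$, and use the component structure forced by diameter $2$ to exhibit two vertices outside $S$ with the same trace. The delicate point throughout is that $S$ is arbitrary and may be large, so the trace sets can be intricate; the crux is arguing that the constraint $|V(G)\setminus S|$ vertices spread over the possible trace patterns, combined with the diameter-$2$ adjacency structure, never permits an isolated pattern.
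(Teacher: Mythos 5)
Your proposal founders at the very first step: the reduction rests on a reversed quantifier. By the paper's definitions, $\Adim(G)$ is the \emph{largest} $k$ for which $G$ contains \emph{some} $k$-ARS, so $\Adim(G)\ge 2$ means that \emph{there exists} a set $S$ all of whose equivalence classes in $\mathcal{Z}_S$ have cardinality at least $2$; it does \emph{not} mean that every $S$ has this property. (Equivalently, $\Adim(G)=1$ means every nonempty proper $S$ produces at least one singleton class.) The statement you set out to prove --- that in a diameter-$2$ graph \emph{no} set $S$ uniquely identifies any vertex outside it --- is strictly stronger and is simply false: in $C_5$ (which has diameter $2$ and $\Adim(C_5)=2$), take $S=\{v_1,v_2\}$ with $v_1v_2$ an edge; the three remaining vertices receive the pairwise distinct representations $(2,1)$, $(2,2)$, $(1,2)$, so every class is a singleton. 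Hence the ``main work'' you describe (showing no trace $N(u)\cap S$ is ever unique) is an attempt to prove a false claim, and no pigeonhole or minimal-counterexample argument can complete it. Your observation that for diameter-$2$ graphs the representation reduces to the trace $N(u)\cap S$ is correct, but it does not rescue the plan.

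What is actually required is much less: exhibit a single $k$-ARS with $k\ge 2$, and this is what the paper does with a short case analysis. If $\delta(G)=1$, the unique neighbor $y$ of a degree-$1$ vertex is forced to be universal by the diameter-$2$ hypothesis, so $\{y\}$ is an $(n(G)-1)$-ARS. If $\delta(G)\ge 2$, pick any vertex $x$: then $|L_1(x)|\ge 2$, and either $|L_2(x)|\ge 2$, in which case $\{x\}$ is already a $2$-ARS, or $L_2(x)=\{y\}$, in which case either $L_1(x)$ (when $x$ and $y$ have the same neighborhood) or $\{y\}$ is a $2$-ARS. Your idea of invoking Lemma~\ref{lem:k-connected} together with Eq.~\eqref{eq:sharp-ecc-diam-2} does work for the $2$-connected case provided $\delta(G)\le n(G)-3$ (otherwise $\#\epsilon(G)\le 1$ and the bound is vacuous), and your cut-vertex case is in fact trivial because a cut vertex of a diameter-$2$ graph must be universal; but these fragments would have to be reassembled around the correct existential goal before they constitute a proof.
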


\begin{proof}
Assume first that $\delta(G) = 1$ and let $x\in V(G)$ be a vertex with $\deg_G(x) = 1$. Let $y$ be the unique neighbor of $x$. Since $\diam(G) = 2$, each additional vertex of $G$ is adjacent to $y$. Therefore, $\{y\}$ is an $(n(G)-1)$-ARS. Using again the assumption $\diam(G) = 2$ we have $n(G)-1 \ge 2$, so that $\Adim(G)\ge 2$. In the rest, we may thus assume that $\delta(G)\ge 2$.

Let $x\in V(G)$ be an arbitrary vertex. Then by the above assumption, $|L_1(x)|\ge 2$. If also $|L_2(x)|\ge 2$, then $\{x\}$ is an ARS yielding the required conclusion. Hence assume that $|L_2(x)| = 1$, and let $y$ be the unique vertex of $L_2(x)$. If $x$ and $y$ have the same neighbors, then $L_1(x)$ is a $2$-ARS. Hence, assume further that there exists a vertex $z\in L_1(x)$ such that $yz\notin E(G)$. Since $\delta(G)\ge 2$, we have  $|L_1(y)|\ge 2$. In addition, as neither $x$ nor $z$ is adjacent to $y$ we also have $|L_2(y)|\ge 2$. We can conclude that in this subcase $\{y\}$ is a $2$-ARS and we are done.
\end{proof}

\section{Graphs $G$ with $\Adim(G)=1$}\label{sec:adim-1}


In this section, we provide some results concerning 1-metric antidimensional graphs in the context of some well-known graph classes. Before doing this, we start by assessing the complexity of recognizing whether a given graph $G$ is 1-metric antidimensional.

\subsection{Recalling the case of trees}

The first studies on the existence of $1$-metric antidimensional graphs were presented in \cite{trujillo-2016}, where the class of $1$-metric antidimensional trees and unicyclic graphs were characterized and polynomial algorithms to decide if a given tree or a unicyclic graph is such were designed. To describe the case of trees, we need the following terminology and notations from~\cite{trujillo-2016}.

If $T$ is a tree and $u\in V(T)$, by $T_v$ we denote the tree $T$ rooted at $v$. Accordingly, for each vertex $v\in V(T)$, we write $p_u(v)$, $C_u(v)$, and $D_u(v)$ to denote the parent, the children, and the descendants of $v$ in $T_u$, respectively. We assume $p_u(u)=\emptyset$ for the sake of consistency.

\begin{definition}[$u$-branches and $\epsilon$-equivalence in trees]
Let $T$ be a tree and $u\in V(T)$. Given a neighbor $v$ of $u$, a $u$-\emph{branch} of $T_u$ at $v$ is the subtree $T_{u,v}$ induced by $\{u,v\}\cup D_u(v)$. We say that two $u$-branches $T_{u,v'}$ and $T_{u,v''}$ are $\epsilon$-\emph{equivalent} if $\epsilon_{T_{u,v'}}(u)=\epsilon_{T_{u,v''}}(u)$.
\end{definition}

Observe that two $\epsilon$-equivalent branches $T_{u,v'}$ and $T_{u,v''}$ satisfy that for every vertex in $T_{u,v'}$ there exists another in $T_{u,v''}$ with the same distance to $u$. This directly relates to the $k$-antiresolving set concept and leads to the definition of \emph{balancing factor} below.

\begin{definition}[Balancing factor]
Given a vertex $u$ of a tree $T$, we define the \emph{balancing factor} $\xi_T(u)$ as the maximum cardinality of a set of $\epsilon$-equivalent $u$-branches in $T_u$. \end{definition}

Note that if no two $u$-branches in $T_u$ are $\epsilon$-equivalent, then, by definition, $\xi_T(u) = 1$. With the concepts above in mind, the following results were shown in \cite{trujillo-2016}.


\begin{lemma}\label{lem_connected}
\emph{\cite{trujillo-2016}}
Any $k$-antiresolving set $S$ in a tree $T$ with $k\ge 2$ induces a connected
graph.
\end{lemma}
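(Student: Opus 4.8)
The plan is to argue by contraposition: assuming that $S$ induces a disconnected subgraph of $T$, I would exhibit a single vertex outside $S$ whose metric representation with respect to $S$ is unique, which forces some equivalence class of $\mathcal{R}_S$ to be a singleton and hence $k = \min\{|Z^i| : Z^i \in \mathcal{Z}_S\} = 1$, contrary to the hypothesis $k \ge 2$.

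First I would introduce the minimal subtree $R$ of $T$ containing every vertex of $S$ (the Steiner tree of $S$), which is well defined and unique since $T$ is a tree. A routine pruning argument shows that every leaf of $R$ belongs to $S$: a leaf not in $S$ could be deleted while still leaving a subtree containing $S$, contradicting minimality. If $S$ induced a connected subgraph we would have $V(R) = S$; hence the disconnectedness assumption guarantees a vertex $w \in V(R) \setminus S$, which is necessarily an \emph{internal} vertex of $R$, so $\deg_R(w) \ge 2$.

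The crucial structural observation is that each component of $R - w$ must contain at least one vertex of $S$; otherwise, deleting that component together with its edge to $w$ would again leave a smaller subtree spanning $S$, contradicting the minimality of $R$. Since distinct components of $R - w$ lie in distinct branches of $T$ at $w$, the vertex $w$ has at least two branches in $T$, each meeting $S$. I would then show that $w$ has a unique metric representation: for any vertex $u \ne w$, $u$ lies in exactly one branch of $T$ at $w$, and by the previous remark there is some $s \in S$ lying in a different branch; since $w$ separates $u$ from $s$ in the tree, the distances add, $d_T(u,s) = d_T(u,w) + d_T(w,s) > d_T(w,s)$, so the $s$-coordinate of $r(u|S)$ differs from that of $r(w|S)$. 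Consequently no $u \notin S$ with $u \ne w$ satisfies $r(u|S) = r(w|S)$, and the class of $w$ in $\mathcal{Z}_S$ is a singleton.

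The step requiring the most care, and the main obstacle, is the minimality argument establishing that \emph{every} branch of $R$ at $w$ meets $S$: this is precisely what upgrades the weak statement ``$w$ is internal'' to the usable statement ``$w$ separates at least two terminals of $S$,'' which is exactly what drives the unique-representation computation. Once that is in place, the additive behaviour of distances through the cut vertex $w$ in a tree is immediate and completes the contradiction, proving $\Adim(T)$ cannot equal $k$ with $k\ge 2$ unless $S$ induces a connected graph.
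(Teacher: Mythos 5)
The paper does not prove this lemma at all: it is imported verbatim from \cite{trujillo-2016} and used as a black box, so there is no in-paper argument to compare yours against. That said, your proof is correct and self-contained. The Steiner-tree device does exactly the right work: minimality forces every leaf of $R$ into $S$ and forces every component of $R-w$ to meet $S$, so the internal vertex $w\in V(R)\setminus S$ separates at least two components of $T-w$ each containing a terminal of $S$; then for any $u\neq w$ the additivity $d_T(u,s)=d_T(u,w)+d_T(w,s)$ through the cut vertex $w$ (for an $s\in S$ on the far side of $w$ from $u$) makes the $s$-coordinate of $r(u|S)$ strictly exceed that of $r(w|S)$, so $w$'s class in $\mathcal{Z}_S$ is a singleton and $k=1$, contradicting $k\ge 2$. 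The only blemish is the closing sentence, which conflates ``$S$ is a $k$-ARS'' with ``$\Adim(T)=k$''; the lemma (and your actual argument) concerns the former, and the body of your proof correctly contradicts $k\ge 2$ for the given set $S$, so this is a phrasing slip rather than a gap.
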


\begin{theorem}\label{theo_one_anti_tree}
\emph{\cite{trujillo-2016}}
A tree $T$ is $1$-metric antidimensional if and only if $\xi_T(v)=1$ for every $v\in V(T)$.
\end{theorem}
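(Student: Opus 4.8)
The plan is to prove the biconditional by relating the balancing factor $\xi_T(v)$ directly to the structure of $k$-antiresolving sets in trees, leveraging Lemma~\ref{lem_connected} for the harder direction.

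\medskip

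For the forward direction (contrapositive), suppose there exists a vertex $u\in V(T)$ with $\xi_T(u)\ge 2$. I would show that $\{u\}$ itself is a $k$-ARS for some $k\ge 2$, hence $\Adim(T)\ge 2$ and $T$ is not $1$-metric antidimensional. Indeed, let $T_{u,v'}$ and $T_{u,v''}$ be two $\epsilon$-equivalent $u$-branches. By the observation following the definition of $\epsilon$-equivalence, for every vertex $x$ in one branch at distance $i$ from $u$ there is a matching vertex $x'$ in the other branch with $d_T(u,x')=i$. Since in a tree all distances from $u$ are realized through the unique neighbor initiating each branch, the metric representation $r(x\,|\,\{u\})=(d_T(u,x))$ is a single coordinate, and every vertex at distance $i$ from $u$ shares this representation. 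The $\epsilon$-equivalence guarantees that each nonempty distance-level $L_i(u)$ restricted to these two branches has at least two vertices, so every equivalence class of $\mathcal{R}_{\{u\}}$ meets both branches and has cardinality at least $2$. Thus $k=\min\{|Z^i|\}\ge 2$ and $\{u\}$ is the required $k$-ARS.

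\medskip

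For the reverse direction (again contrapositive), assume $T$ is not $1$-metric antidimensional, so there is a $k$-ARS $S$ with $k\ge 2$. By Lemma~\ref{lem_connected}, $S$ induces a connected subtree of $T$. The strategy is to pick an appropriate ``boundary'' vertex $u$ of $S$ (a vertex of $S$ adjacent to vertices outside $S$, or the root from which the attached components hang) and argue that the equivalence classes of $\mathcal{R}_S$ of minimum size, each of cardinality $k\ge 2$, force two distinct $u$-branches hanging off $S$ to be $\epsilon$-equivalent, yielding $\xi_T(u)\ge 2$. Concretely, because $S$ is connected, distance from any $x\notin S$ into $S$ is governed by the nearest attachment point; two vertices in the same class of $\mathcal{R}_S$ must have identical distance vectors to all of $S$, which in a tree forces them to lie in distinct components hanging off a common attachment vertex $u$ at the same distance, and a counting/matching argument across these components produces two $u$-branches with equal eccentricity from $u$.

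\medskip

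The main obstacle I anticipate is the reverse direction: carefully translating the combinatorial condition ``some class of $\mathcal{R}_S$ has size $\ge 2$'' into the existence of two genuinely $\epsilon$-equivalent $u$-branches at a single vertex $u$. The subtlety is that two vertices with the same representation relative to all of $S$ need not lie in branches at the same vertex unless one exploits the tree metric and the connectivity of $S$ precisely; one must identify the correct attachment vertex $u$ and verify that the matched vertices exhaust two full branches with equal eccentricity rather than merely agreeing on finitely many distance levels. Establishing this equivalence of whole branches—rather than a partial level-by-level coincidence—is where the argument requires the most care, and it is exactly here that Lemma~\ref{lem_connected} is indispensable, since a disconnected $S$ would allow matchings that do not localize at a single vertex.
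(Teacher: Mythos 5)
The paper does not actually prove this statement---it is quoted from \cite{trujillo-2016}---so your attempt can only be judged on its own terms, and both directions have genuine problems. In the forward direction your witness set is wrong: if $\xi_T(u)\ge 2$, the singleton $\{u\}$ need \emph{not} be a $k$-ARS with $k\ge 2$. Take the spider with center $u$ and three legs of lengths $1,1,5$: the two legs of length $1$ are $\epsilon$-equivalent $u$-branches, so $\xi_T(u)=2$, yet $L_2(u)$ is a singleton and $\{u\}$ is only a $1$-ARS. Your sentence ``every equivalence class of $\mathcal{R}_{\{u\}}$ meets both branches'' fails for the levels $i$ exceeding the common depth $D$ of the two equivalent branches, which are populated only by other, deeper branches. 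The correct witness is $S=V(T)\setminus\bigl((V(T_{u,v'})\cup V(T_{u,v''}))\setminus\{u\}\bigr)$: with $u$ and everything outside the two equivalent branches placed in $S$, the representation of each remaining vertex is determined by its distance to $u$, and every level $1,\dots,D$ is hit by both branches, so every class has size at least $2$. (This is essentially the witness used in the paper's proof of the geodetic generalization, Theorem~\ref{theo_one_anti_block}.)

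In the reverse direction, your key localization claim---that two vertices in a common class of $\mathcal{R}_S$ ``lie in distinct components hanging off a common attachment vertex $u$ at the same distance''---is only half right. Connectivity of $S$ (Lemma~\ref{lem_connected}) does give that equal representations are equivalent to equal attachment vertex and equal distance to it, but the two vertices may lie in the \emph{same} branch at that attachment vertex (e.g.\ a branch ending in a cherry), in which case no two branches at that vertex need be $\epsilon$-equivalent; and even when they lie in distinct branches, agreement at one distance level only bounds both depths from below, it does not make them equal. The repair is to work at the extremal level: let $a$ be an attachment vertex, $D$ the maximum depth among the branches at $a$ avoiding $S$, and $x\ne y$ two vertices of the class at distance $D$. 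If they sit in different branches at $a$, both branches have depth exactly $D$ and are $\epsilon$-equivalent, so $\xi_T(a)\ge 2$. If they sit in the same branch, pass to the meeting point $m$ of the paths $a\to x$ and $a\to y$; maximality of $D$ forces the two $m$-branches toward $x$ and toward $y$ to have depth exactly $D-d_T(a,m)$ each, so $\xi_T(m)\ge 2$. As written, your ``counting/matching argument'' is exactly the missing content of the proof, and the vertex witnessing $\xi_T\ge 2$ is in general not the attachment vertex you single out.
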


The most interesting fact of the characterization above is that it can be translated into a polynomial algorithm that checks whether a given tree is $1$-metric antidimensional. A natural generalization of this study was the one in which an extra edge is added to a tree, given the step to consider unicyclic graphs. For those, the problem can still be solved in polynomial time, which was also made in \cite{trujillo-2016}, together with a structural characterization of those $1$-metric antidimensinal unicyclic graphs. Such characterization relies on the one from Theorem \ref{theo_one_anti_tree}. It is hence of interest to continue with some other characterizations of $1$-metric antidimensional graphs among other generalizations of trees.

\subsection{Geodetic graphs}\label{sec_block}

In this section, we provide a characterization of the class of geodetic graphs which are $1$-metric antidimensional. Recall that a graph is \emph{geodetic} if each pair of distinct vertices is connected by a unique shortest path. Notice that every tree, every complete graph, and every odd-length cycle is geodetic. Moreover, if every biconnected component of a graph is geodetic, then the graph itself is geodetic. In particular, every block graph (a graph in which the biconnected components are complete) is geodetic. Similarly, because a cycle graph is geodetic when it has odd length, every cactus graph in which the cycles have odd length is also geodetic. These cactus graphs are exactly the connected graphs in which all cycles have odd length. More strongly, a planar graph is geodetic if and only if all of its biconnected components are either odd-length cycles or geodetic subdivisions of a four-vertex clique~\cite{Stemple-1968}.


Let $G$ be a geodetic graph, and let $u\in V(G)$. Given any $v\in V(G)$, $v\neq u$, by definition there exists a \emph{unique shortest $u,v$-path} in $G$. The set of all these shortest $u,v$-paths forms a tree rooted at $u$ denoted as $T_u(G)$. The following statement provides the announced characterization of the geodetic graphs that are $1$-metric antidimensional. Note that Theorem~\ref{theo_one_anti_tree} is its direct consequence.

\begin{theorem}\label{theo_one_anti_block}
If $G$ is a geodetic graph, then $\Adim(G) = 1$ if and only if $\Adim(T_u(G)) = 1$ for every $u\in V(G)$.
\end{theorem}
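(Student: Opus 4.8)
The plan is to prove the equivalence by contraposition in each direction, moving the obstruction ``a vertex carrying two $\epsilon$-equivalent branches'' between $G$ and the shortest-path trees $T_u(G)$. The recurring tool is that, $G$ being geodetic, $T_u(G)$ is exactly the union of the unique $u$-geodesics, so $d_{T_u(G)}(u,z)=d_G(u,z)$ for all $z$; hence the first-edge classes (``cones'') at $u$ and the eccentricity of $u$ inside each cone are the same whether computed in $G$ or in $T_u(G)$. More generally, if $z$ lies in a child-branch of a vertex $w$ in $T_u(G)$, the $u$-geodesic to $z$ runs through $w$, so its $w$--$z$ portion is a $G$-geodesic and $d_G(w,z)=d_{T_u(G)}(w,z)$; thus eccentricities measured at $w$ also agree on such branches. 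Throughout I rely on the two imported tree facts: Lemma~\ref{lem_connected} (a $k$-ARS with $k\ge2$ in a tree is connected) and Theorem~\ref{theo_one_anti_tree} (a tree fails to be $1$-metric antidimensional exactly when some vertex has two $\epsilon$-equivalent branches).

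First I would treat ``$\Adim(T_u(G))\ge2$ for some $u$ $\Rightarrow$ $\Adim(G)\ge2$''. Theorem~\ref{theo_one_anti_tree} supplies a vertex $w$ of $T_u(G)$ with two $\epsilon$-equivalent branches $B_1,B_2$ of common eccentricity $e$ from $w$, and the natural candidate $k$-ARS for $G$ is $S=V(G)\setminus\big((B_1\cup B_2)\setminus\{w\}\big)$. In the tree every $z\in B_i\setminus\{w\}$ meets $S$ only through $w$, so $r(z\mid S)$ is a function of $d(z,w)$ alone and the shared eccentricity forces each level $1,\dots,e$ to be doubly occupied, splitting $V(G)\setminus S$ into classes of size $\ge2$. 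To port this to $G$ I must show the shortest $z$--$s$ paths ($z\in B_i$, $s\in S$) still pass through $w$, i.e. that no chord short-circuits a branch. The key geodetic fact is that an edge between two different cones can only join vertices that are \emph{equidistant} from the apex --- otherwise some vertex would acquire two shortest paths from the apex --- so such ``level chords'' are confined to equal distance-levels; arranging (by the choice of the pair $\{B_1,B_2\}$, and by reducing to the case of two child-branches after possibly re-centering the root) that they fall inside $B_1\cup B_2$ keeps every representation toward $S$ factoring through $w$. Controlling these chords is exactly where geodeticity is indispensable, and it is the main obstacle of this direction.

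For the converse, ``$\Adim(G)\ge2$ $\Rightarrow$ $\Adim(T_u(G))\ge2$ for some $u$'', I would start from a $k$-ARS $S$ with $k\ge2$ and a twin pair $x\ne y$ (so $d_G(x,s)=d_G(y,s)$ for every $s\in S$). Fixing some $s\in S$, let $c$ be the unique vertex at which the $s$--$x$ and $s$--$y$ geodesics separate; then $c$ is equidistant from $x$ and $y$ and puts them in two distinct cones of $T_c(G)$ at a common depth $\delta$. Rooting at $c$, the goal is to certify $\xi_{T_c(G)}(c)\ge2$, i.e. that these two cones are $\epsilon$-equivalent, which by Theorem~\ref{theo_one_anti_tree} makes $T_c(G)$ fail to be $1$-metric antidimensional. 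The delicate point is once more the passage from ``equal depth of $x,y$'' to ``equal eccentricity of their cones'': I intend to choose the data $(S,x,y,s)$ extremally --- maximizing $\delta$ over all admissible twin pairs --- and argue that a strictly deeper vertex in one cone would, via its own twin, produce a configuration with larger $\delta$ and contradict the choice. Making this extremal argument airtight, and excluding the degenerate possibility that the deeper vertex merely sits in $S$ (which I would meet by taking $S$ inclusion-minimal), is the crux here. Assembling the two contrapositives then yields the claimed characterization.
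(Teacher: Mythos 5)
Your proposal is an outline rather than a proof: at both of the places you yourself flag as ``the main obstacle'' and ``the crux'', the argument is not supplied, and these are exactly the points where the work lies. For the direction ``some $\Adim(T_u(G))\ge 2$ implies $\Adim(G)\ge 2$'' your construction (take $S$ to be the complement of two $\epsilon$-equivalent branches at a vertex $w$) coincides with the paper's, but the step you leave open --- controlling level chords so that every metric representation toward $S$ factors through $w$ --- is precisely what has to be proved. Your observation that cross-cone edges join equidistant vertices is correct, but ``arranging that they fall inside $B_1\cup B_2$ by the choice of the pair and by re-centering the root'' is not an argument: a level chord from $B_1$ to a third branch contained in $S$ would give some $z\in B_1$ a geodesic to $S$ avoiding $w$, and nothing you say rules this out. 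There is also an unaddressed issue when $w$ is not the root $u$: one of the two $\epsilon$-equivalent branches of $T_u(G)$ at $w$ may be the parent branch, in which tree distances from $w$ can strictly exceed $G$-distances, so equal tree-eccentricities do not automatically produce the doubly occupied $G$-distance levels your class-counting needs.

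For the converse you depart from the paper, and the departure does not close. The paper disposes of $|S|=1$ directly, then roots the tree at a vertex $w\in S$ having a neighbour outside $S$, uses $\xi_{T_w(G)}(w)=1$ to obtain a unique deepest leaf $z\notin S$ among the branches leaving $S$ (its existence relying on the connectivity of $S$ from Lemma~\ref{lem_connected}), and shows that no vertex can share the representation of $z$. You instead root at the branching vertex $c$ of two geodesics to a twin pair and must show that the two cones at $c$ are $\epsilon$-equivalent; there is no reason for their eccentricities to agree, and your extremal repair (maximize the common depth $\delta$) has no working induction step: if $x'$ is a strictly deeper vertex of one cone, its own twin $y'$ determines a new branching vertex $c'$ and a new depth $\delta'=d(c',x')$ with no reason to exceed $\delta$, and minimality of $S$ does not prevent $x'\in S$ (subsets of a $k$-ARS need not be $k$-ARSs, so ``inclusion-minimal'' is itself a delicate notion here). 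This direction should be replaced by the paper's boundary-vertex argument.
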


\begin{proof}
($\Leftarrow$) By contradiction, assume $G$ contains a $k$-antiresolving set $S$, where $k > 1$.

Consider first the case $S = \{v\}$, for some $v\in V(G)$.
For the sake of simplicity, denote $T_v(G)$ as $T$. Since $\Adim(T) = 1$, by Theorem~\ref{theo_one_anti_tree} we get $\xi_T(v)=1$. But this implies there exists a vertex $x$ (which is a leaf in $T$) such that $d_T(v,x)>d_T(v,y)$ for every $y \in V(T)\setminus \{v,x\}$. Thus, there does not exist a vertex having the same metric representation as
$x$ with respect to $S$, which is a contradiction.

Now, we consider the case $|S|\ge 2$. Let $w$ be a vertex in $S$ having at least
one neighbor not in $S$.
Let $w_1,\ldots,w_r$ be those vertices in $N(w)\setminus S$ and again, for the sake of simplicity, denote $T_w(G)$ as $T$.
Since $\Adim(T) = 1$, by Theorem~\ref{theo_one_anti_tree} we get $\xi_T(w)=1$. Then, there exists a leaf vertex $z \not \in S$
belonging to some $w$-branch $T_{w,w_i}$ with $i \in [r]$ such that
$d_T(z,w)>d_T(z',w)$ for every leaf $z' \neq z$ belonging to any $w_j$-branch
$T_{w,w_j}$ with $j\in [r]$. Note that such a leaf vertex $z \not\in S$ exists
because $S$ is connected as stated by
Lemma~\ref{lem_connected}. It follows that any vertex $y$ whose
metric representation with respect to $S$ is equal to that of $z$ does not
belong to
a $w$-branch $T_{w,w_j}$ with $j\in [r]$. Consequently, it results
that $y$ belongs to a $w$-branch $T_{w, w'}$ where $w'\in S$. Given that both
$y$ and $z$ have the same metric representation with respect to $S$, we obtain the following:
\begin{align*}
& d_T(z,w')=d_T(y,w'),\\
& d_T(z,w)-1 = d_T(y,w)+1,\\
& d_T(z,w) \neq d_T(y,w).
\end{align*}
It turns out that $y$ and $z$ do not have the same metric representation with
respect to $S$, a contradiction. As a consequence, it
follows that
$\Adim(T) = 1$.

($\Rightarrow$) Assume $\Adim(G) = 1$ and, by contradiction, $T_u(G)$ contains a $k$-antiresolving set for some $u\in V(G)$ and some $k > 1$. For the sake of simplicity, denote $T_u(G)$ as $T$. According to Theorem~\ref{theo_one_anti_tree}, there exists a vertex $v$ such that $\xi_T(v)\ge 2$. Then, the set
$V(T)\setminus (V(T_{v,v'})\cup V(T_{v,v''}))$, where $T_{v,v'}$ and $T_{v,v''}$ are
two $\xi$-equivalent $v$-branches, is a $k$-antiresolving set for some
$k\ge 2$, which is a contradiction.
\end{proof}

%

In~\cite{trujillo-2016}, an $O(n^2)$ algorithm to decide whether a tree is $1$-metric antidimensional is provided. By Theorem~\ref{theo_one_anti_block}, the same algorithm can be used to determine in $O(n^3)$-time whether a geodetic graph $G$ is $1$-metric antidimensional.

We may recall that examples of unicyclic geodetic graphs that are $1$-metric antidimensional are already known from \cite{trujillo-2016}, in which the length of the unique cycles of such graphs is odd and of arbitrarily large order. In addition, all those geodetic graphs with nine and ten vertices are drawn in Figure~\ref{fig:6-geodetic} and~\ref{fig:15-geodetic}, respectively. Notice that some of them are neither trees nor unicyclic graphs.




\begin{figure}[ht]
    \centering
\includegraphics[width=0.65\linewidth]{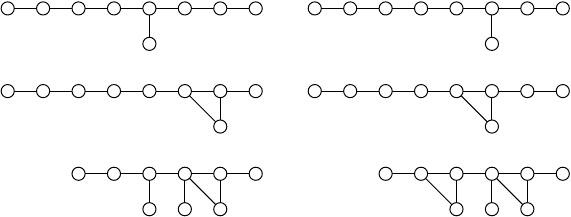}
    \caption{\small The six geodetic graphs $G$ of order $9$ having $\Adim(G)=1$.}
    \label{fig:6-geodetic}
\end{figure}

\begin{figure}[ht]
    \centering
\includegraphics[width=0.65\linewidth]{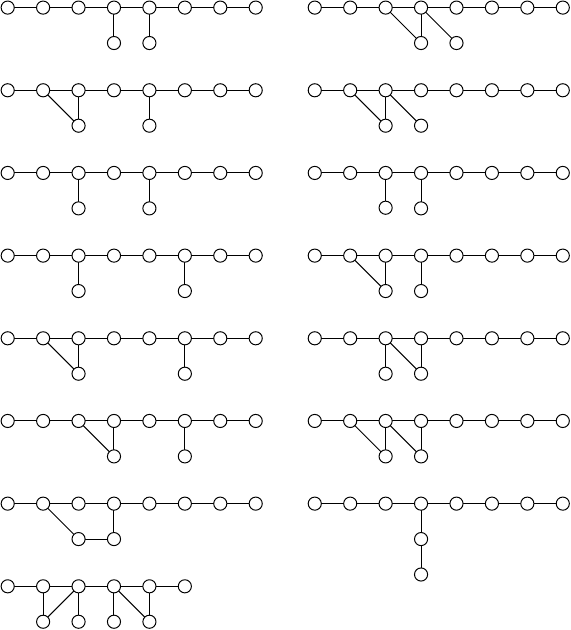}
    \caption{\small The fifteen geodetic graphs $G$ of order $10$ having $\Adim(G)=1$.}
    \label{fig:15-geodetic}
\end{figure}

\subsection{Block graphs}

Theorem~\ref{theo_one_anti_block} provides a characterization of all geodetic graphs $G$ such that $\Adim(G) = 1$. Unfortunately, this characterization does not provide ``structural'' properties of such graphs. Here we consider a subclass of geodetic graphs, namely block graphs, and for these, we provide some structural properties that follow from the results given above. For this sake, recall that two vertices $u,v$ are called \textit{twins} if any vertex $x\ne u,v$ is either adjacent to both of them or to none of them.

\begin{lemma}
\label{lem:blocks}
If $G$ is a block graph such that $\Adim(G)=1$, then the following properties hold.
\begin{enumerate}[{\em (i)}]
\item $\diam(G)$ is odd.
\item Each module of $G$ is trivial (in particular, $G$ does not contain twins).
\item Each block of $G$ contains at most one not cut-vertex.
\item Each pendant block of $G$ is isomorphic to $K_2$.
\end{enumerate}
\end{lemma}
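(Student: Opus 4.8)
The plan is to obtain (ii) directly from Proposition~\ref{prop:module}, to deduce (iii) and (iv) from (ii) together with the elementary structure of blocks, and to treat the diameter statement (i) separately through the shortest-path trees $T_u(G)$ and the tree characterization of Theorem~\ref{theo_one_anti_tree}. For (ii), since $\Adim(G)=1$, Proposition~\ref{prop:module} already says $G$ is prime, so every module of $G$ is trivial; and if $u,v$ were twins, then $N(u)\setminus\{u,v\}=N(v)\setminus\{u,v\}$ would make $\{u,v\}$ a nontrivial module, which primeness forbids.

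For (iii) and (iv) I would use that a vertex of a block $B$ which is not a cut-vertex of $G$ lies in no other block, so in a block graph (where $B\cong K_m$ is complete) its neighborhood is exactly $V(B)$ minus itself. Hence any two non-cut-vertices $a,b$ of the same block $B$ satisfy $N(a)\setminus\{a,b\}=N(b)\setminus\{a,b\}=V(B)\setminus\{a,b\}$ and are adjacent to nothing outside $B$, so $a,b$ are twins, contradicting (ii); this gives (iii). For (iv), a pendant block contains exactly one cut-vertex, so by (iii) it has at most two vertices, and since every block has at least two vertices it must be isomorphic to $K_2$.

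The heart of the lemma, and the main obstacle, is part (i), which I would prove by contradiction. Suppose $d:=\diam(G)$ is even, pick a diametral pair $u,w$ with its unique shortest path $u=x_0,x_1,\dots,x_d=w$, and set $c=x_{d/2}$. The first point is that in a block graph every internal vertex of a geodesic is a cut-vertex separating its two geodesic neighbors: if $x_{i-1},x_i,x_{i+1}$ all lay in one complete block, then $x_{i-1}x_{i+1}\in E(G)$ would give a shortcut, so consecutive geodesic vertices straddle distinct blocks and hence a cut-vertex. Thus $c$ separates $u$ from $w$, yielding the additivity $d_G(u,z)=d_G(u,c)+d_G(c,z)$ whenever $z$ lies on the side of $c$ opposite to $u$. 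The key consequence is $\epsilon_G(c)=d/2$: any vertex $z$ with $d_G(c,z)>d/2$ lies on the side of $c$ opposite to at least one of $u,w$, and additivity would then produce a distance strictly larger than $d$, contradicting $\diam(G)=d$.

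Finally I would transfer this to $T:=T_c(G)$. Since $u$ and $w$ reach $c$ through the distinct neighbors $x_{d/2-1}$ and $x_{d/2+1}$, they lie in two different $c$-branches of $T$; each branch contains a vertex (namely $u$, respectively $w$) at distance $d/2$ from $c$, and because $\epsilon_T(c)=\epsilon_G(c)=d/2$ each branch has eccentricity exactly $d/2$ from $c$. Hence these two $c$-branches are $\epsilon$-equivalent, so $\xi_T(c)\ge 2$ and $\Adim(T)\ge 2$ by Theorem~\ref{theo_one_anti_tree}. But block graphs are geodetic, so Theorem~\ref{theo_one_anti_block} forces $\Adim(T_u(G))=1$ for every $u$, in particular for $u=c$, a contradiction; therefore $d$ is odd. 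The delicate steps to get right are the cut-vertex/additivity bookkeeping that pins down $\epsilon_G(c)=d/2$ and the verification that the two opposite branches are genuinely distinct branches of $T$ sharing the maximal eccentricity.
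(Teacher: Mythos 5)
Your proposal is correct and follows essentially the same route as the paper: parts (ii)--(iv) are proved identically (primeness via Proposition~\ref{prop:module}, non-cut-vertices of a block being twins, and the pendant-block consequence), and part (i) rests on the same mechanism of exhibiting a vertex $c$ with two $\epsilon$-equivalent $c$-branches in $T_c(G)$ and invoking Theorems~\ref{theo_one_anti_tree} and~\ref{theo_one_anti_block}. The only difference is that the paper simply cites the fact that a block graph of even diameter has a one-vertex center, whereas you reconstruct that central vertex explicitly as the midpoint of a diametral geodesic and verify $\epsilon_G(c)=d/2$ by the cut-vertex additivity argument, which makes your version of (i) somewhat more self-contained but not a different proof.
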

\begin{proof}
(i) If $\diam(G)$ is even, the center of $G$ consists of just one vertex $v\in V(G)$. 
As a consequence, the balancing factor $\xi(v)$ in the tree $T_v(G)$ is at least two. According to Theorems~\ref{theo_one_anti_tree} and~\ref{theo_one_anti_block}, we get $\Adim(G)\ge 2$.

(ii) If $G$ contains a non-trivial module, then $G$ is not prime and hence Proposition~\ref{prop:module} implies $\Adim(G)\ge 2$. $G$ cannot contain twins because a set of twins forms a non-trivial module.

(iii) Assume that a block of $G$ contains two or more not cut-vertices. These vertices are twins of $G$, and hence $\Adim(G)\ge 2$.

(iv) Consider now a pendant block of $G$. This block must be a $K_2$ graph, for otherwise, it contains two or more not cut-vertices.
\end{proof}

Notice that the opposite of Lemma \ref{lem:blocks} is not true in general. That is, there are block graphs $G$ satisfying all four items of the lemma, but $\Adim(G)>1$. To see this, we use the following result, where we also show an infinite family of block graphs that are $1$-metric antidimensional. Let $B_t$ be the block graph obtained from a path $P_{2t}=v_1v_2\dots v_{2t}$ ($t\ge 2$) by adding other $t-1$ isolated vertices $u_1,\dots,u_{t-1}$ and the edges $u_iv_{2i}, u_iv_{2i+1}$ for every $i\in [t-1]$.

\begin{remark}
Let $t\ge 2$ be an integer. Then $\Adim(B_t)=1$ if and only if $t \equiv 1\pmod 2$.
\end{remark}

\begin{proof}
Assume first $t\equiv 0\pmod 2$. Consider the vertex $x=u_{t/2}$. According to the construction of the graph $B_t$, such a vertex is a central vertex of $B_t$, and for every $i\in [\epsilon(x)]$, there are at least two vertices at distance $i$ from $x$. Thus, the set $\{x\}$ is indeed a $2$-ARS of $B_t$, and so, $\Adim(B_t)\ge 2$.

On the other hand, assume $t \equiv 1\pmod 2$. To see that $\Adim(B_t)=1$, we shall run Algorithm ADIM-1 on $V(B_t)$. Since $t \equiv 1\pmod 2$, the center of $B_t$ is not unique, and moreover, for each vertex of $z\in V(B_t)$, there is a unique vertex of degree one (either $v_1$ or $v_{2t}$, say $v_1$ unless $z=v_1$, in which case the vertex in question is taken as $v_{2t}$) such that $d_{B_t}(z,v_1)=\epsilon(z)$. Thus, if $z$ is the first vertex that Algorithm ADIM-1 selects, then in the first step of the algorithm, the vertex $v_1$ (resp. $v_{2t}$ if $z=v_1$) is added to the set $S$ of Algorithm ADIM-1. Due to the structure of the graph $B_t$, in the next step of the algorithm, all the vertices of $V(B_t)$ will be added to the set $S$. Thus, the algorithm will end, and it will return $\Adim(B_t)=1$.
\end{proof}

As a consequence of the remark above, we observe that those block graphs $B_t$, with $t\equiv 0\pmod 2$, satisfy the conditions of Lemma \ref{lem:blocks}, but they are not $1$-metric antidimensional.

\section{Increasing the privacy properties of graphs}\label{sec-increasing-k}

Since a $1$-metric antidimensional graph might not be used to publish sensible information as a social network, it is desirable to perform some changes in such a graph so that its privacy properties increase. Some investigations in this regard were made in \cite{Mauw-2016}. There was first described a privacy-preserving anonymization approach that resists active attacks when the privacy measure $(k, \ell)$-anonymity is considered. Such an anonymization method was based only on some edge addition operations, and preserving the original number of vertices in the social network in question.

In this section, we are also interested in how to embed a given graph $G$ such that $\Adim(G) = 1$ into a larger graph that can hide the property of being 1-metric antidimensional.
To this end, we consider some product graphs in order to ensure the anonymity of the vertices of $G$.

\subsection{Strong product}

The distance function of the strong product is as follows. If $(g, h)$ and $(g', h')$ are vertices of a strong product $G\boxtimes H$, then
\begin{equation}
\label{eq:distance-in-strong}
d_{G\boxtimes H}((g, h), (g', h')) = \max\{ d_G(g, g'), d_H(h, h')\}\,.
\end{equation}
For the proof of~\eqref{eq:distance-in-strong} see~\cite[Proposition~5.4]{hik-2011}.

The following result ensures that the strong product is more friendly for our purposes than the Cartesian product (see Subsection~\ref{sec-Cartesian}), as it allows us to simultaneously embed two graphs $G$ and $H$, for which we have $\Adim(G) = \Adim(H) = 1$ such that (some) anonymity for both $V(G)$ and $V(H)$ is possible.

\begin{theorem}
\label{thm:strong}
If $G$ and $H$ are connected graphs of order at least $2$, then $\Adim(G\boxtimes H) \ge 2$. If in addition each of $G$ and $H$ is of order at least $3$, then $\Adim(G\boxtimes H) \ge 3$. Moreover, in each of the cases, there exists a corresponding $k$-ARS of cardinality $1$.
\end{theorem}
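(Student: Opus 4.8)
The plan is to prove both inequalities at once by exhibiting, for an \emph{arbitrary} vertex $(a,b)\in V(G\boxtimes H)$, that the singleton $S=\{(a,b)\}$ is a $k$-ARS with $k\ge 2$ (resp.\ $k\ge 3$). This simultaneously gives the lower bounds on $\Adim$ and the promised $k$-ARS of cardinality $1$. The point is that for a singleton $S$ the metric representation of a vertex is a single number, so the equivalence classes of $\mathcal{R}_S$ are exactly the nonempty distance shells $\{(g,h):d_{G\boxtimes H}((g,h),(a,b))=m\}$ for $m\in\{1,\dots,M\}$, where $M=\max\{\epsilon_G(a),\epsilon_H(b)\}$. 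By the characterization recalled in the introduction, the associated $k$ equals $\min_m N(m)$, where $N(m)$ is the size of the shell at distance $m$; thus it suffices to bound each $N(m)$ from below.

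First I would invoke the max-distance formula~\eqref{eq:distance-in-strong}: a vertex $(g,h)$ lies within distance $m$ of $(a,b)$ precisely when $d_G(g,a)\le m$ and $d_H(h,b)\le m$. Writing $\alpha(m)$ and $\beta(m)$ for the sizes of the distance-$m$ balls around $a$ in $G$ and around $b$ in $H$, the shell count is $N(m)=\alpha(m)\beta(m)-\alpha(m-1)\beta(m-1)$. A short rearrangement puts this in the transparent form $N(m)=|L_m(a)|\,\beta(m)+|L_m(b)|\,\alpha(m-1)$, where $L_m(a)$ and $L_m(b)$ are the distance-$m$ levels in $G$ and $H$.

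The core of the argument is a case split on $m$ against the two eccentricities. For $1\le m\le\min\{\epsilon_G(a),\epsilon_H(b)\}$ both levels $L_m(a)$ and $L_m(b)$ are occupied (in a connected graph every distance level up to a vertex's eccentricity is nonempty), so $N(m)\ge\beta(m)+\alpha(m-1)\ge 2$; moreover $\beta(m)\ge\beta(1)=1+\deg_H(b)\ge 2$, which already forces $N(m)\ge 3$ throughout this range (as a sanity check, $N(1)=\deg_G(a)+\deg_H(b)+\deg_G(a)\deg_H(b)\ge 3$). The remaining range $\min\{\epsilon_G(a),\epsilon_H(b)\}<m\le M$ is the one I expect to be the only real, if modest, obstacle: there one of the two levels is empty. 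Say $\epsilon_H(b)<m$, so $L_m(b)=\emptyset$ and $N(m)=|L_m(a)|\,\beta(m)=|L_m(a)|\cdot n(H)$; here $|L_m(a)|\ge 1$ since $m\le\epsilon_G(a)$, and the surviving factor has already saturated to the full order $n(H)$. The symmetric subcase $\epsilon_G(a)<m$ gives $N(m)=|L_m(b)|\cdot n(G)$ in the same way.

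This last observation is exactly where the two hypotheses enter: in the thin far shells one coordinate is empty, but the other has grown to the whole vertex set, whose order is $\ge 2$ in the first case and $\ge 3$ in the second. Combining the ranges, every shell satisfies $N(m)\ge 2$ once $n(G),n(H)\ge 2$, and $N(m)\ge 3$ once $n(G),n(H)\ge 3$. Hence $\min_m N(m)\ge 2$ (resp.\ $\ge 3$), so $\{(a,b)\}$ is the desired singleton $k$-ARS and $\Adim(G\boxtimes H)\ge 2$ (resp.\ $\ge 3$). The pleasant upshot is that no careful choice of base vertex is needed: every vertex of $G\boxtimes H$ already witnesses the bound, which is precisely what makes the strong product convenient for the embedding application.
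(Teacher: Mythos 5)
Your proof is correct, and it shares the paper's overall strategy: fix an arbitrary vertex $(g,h)$, use the max-distance formula \eqref{eq:distance-in-strong}, and show that every distance shell around $(g,h)$ in $G\boxtimes H$ has at least $2$ (resp.\ $3$) elements, so that the singleton is the required $k$-ARS. The execution differs in a way worth noting. The paper argues by exhibiting explicit witnesses in each shell: a vertex $(g',h)$ at distance $s$ is accompanied by $(g',h')$ for a neighbor $h'$ of $h$, and for the bound $\ge 3$ it splits into the cases $H$ complete versus $H$ containing an induced path $h'hh''$, producing three vertices per shell; it then disposes of the remaining shells via a WLOG assumption $\diam(G)\ge\diam(H)$. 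Your exact shell count $N(m)=|L_m(g)|\,\beta(m)+|L_m(h)|\,\alpha(m-1)$ replaces both the witness construction and the complete/non-complete case split by a single identity, and it buys you something the paper's write-up treats only loosely: the ``far'' shells with $\min\{\epsilon_G(g),\epsilon_H(h)\}<m\le\max\{\epsilon_G(g),\epsilon_H(h)\}$, where one coordinate's level is empty. The paper's appeal to $\diam(G)\ge\diam(H)$ does not by itself control the eccentricities of the particular vertices $g$ and $h$, whereas your observation that in those shells the empty coordinate's ball has saturated to the full vertex set (so $N(m)\ge n(H)$ or $N(m)\ge n(G)$) closes that range cleanly and makes transparent exactly where the hypotheses $n(G),n(H)\ge 2$ and $n(G),n(H)\ge 3$ are used. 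In short: same theorem, same singleton-ARS idea, but your counting version is tighter in its bookkeeping and slightly more self-contained.
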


\begin{proof}
By the commutativity of $G\boxtimes H$ we may, without loss of generality, assume throughout the proof that $\diam(G)\ge \diam(H)$.

Let $(g,h)$ be an arbitrary vertex of $G\boxtimes H$.  Consider an arbitrary vertex $g'\in V(G)$ with $d_{G}(g,g') = s\ge 1$. Let $h'$ be an arbitrary neighbor of $h$ in $H$; it exists since $n(H)\ge 2$. By~\eqref{eq:distance-in-strong},
$$d_{G\boxtimes H}((g,h), (g',h)) = d_{G\boxtimes H}((g,h), (g',h'))\,,$$
which in turn implies that $|L_s((g,h))|\ge 2$. As $\diam(G)\ge \diam(H)$ this implies (having~\eqref{eq:distance-in-strong} in mind) that $\{(g,h)\}$ forms a $2$-ARS of $G\boxtimes H$.

Assume now that $n(G)\ge 3$ and $n(H)\ge 3$. In the first subcase, let $H=K_n$, $n\ge 3$. Let $h\in V(H)$. Then, if $g'$ is an arbitrary vertex of $G$ different from $g$, then in view of~\eqref{eq:distance-in-strong} we have $d_{G\boxtimes H}((g,h),(g',h)) = d_{G\boxtimes H}((g,h),(g',h'))$ for every $h'\ne h$. Since $n(H)\ge 3$ this implies that $\{(g,h)\}$ forms a $3$-ARS of $G\boxtimes H$.

In the second subcase, assume that $H$ is not a complete graph. Then $H$ contains three vertices $h, h', h''$, such that $hh'\in E(H)$, $hh''\in E(H)$, and $h'h''\notin E(H)$. Consider now an arbitrary vertex $g'$ of $G$ different from $g$. Using~\eqref{eq:distance-in-strong} again, we infer that
$$d_{G\boxtimes H}((g,h), (g',h)) = d_{G\boxtimes H}((g,h), (g',h')) = d_{G\boxtimes H}((g,h), (g',h''))\,.$$
Setting $s = d_{G\boxtimes H}((g,h), (g',h))$ this implies that $|L_s((g,h))|\ge 3$. Therefore, also in this subcase $\{(g,h)\}$ forms a $3$-ARS of $G\boxtimes H$.
\end{proof}

\subsection{Lexicographic product}

The next result asserts that $\Adim(G)$ for a graph $G$ that represents a lexicographic product is in general quite large, which makes them good candidates to be used for obscuring the privacy of social networks against active attacks.

\begin{proposition}
\label{thm:lexico}
If $G$ is a connected graph of order at least $2$ and $M$ is the largest module of $G$ different from $V(G)$, then $\Adim(G\circ H) \ge |M|\cdot n(H)$.
\end{proposition}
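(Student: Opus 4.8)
The plan is to reduce the statement to Proposition~\ref{prop:module} by exhibiting a large module of the lexicographic product $G\circ H$. Concretely, I would show that $W = M\times V(H)$ is a module of $G\circ H$. Since $M\ne V(G)$, the set $W$ is a nonempty proper subset of $V(G\circ H)$, so Proposition~\ref{prop:module} applied to $G\circ H$ with the module $W$ immediately yields $\Adim(G\circ H)\ge |W| = |M|\cdot n(H)$, which is exactly the claim. Thus the entire argument hinges on a single verification.

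That verification is the one substantive step, and for it I would first record the neighborhood structure of the lexicographic product. From the adjacency rule, a vertex $(g',h')$ is adjacent to $(g,h)$ precisely when $g=g'$ and $h'\in N_H(h)$, or when $g'\in N_G(g)$; hence
$$N_{G\circ H}((g,h)) = \big(\{g\}\times N_H(h)\big)\ \cup\ \big(N_G(g)\times V(H)\big)\,.$$
Now take any two vertices $(g_1,h_1),(g_2,h_2)\in W$, so that $g_1,g_2\in M$. Because $\{g_i\}\times N_H(h_i)\subseteq M\times V(H)=W$, the ``$H$-part'' of each neighborhood lies entirely inside $W$ and contributes nothing externally, leaving
$$N_{G\circ H}((g_i,h_i))\setminus W = \big(N_G(g_i)\setminus M\big)\times V(H)\qquad (i=1,2)\,.$$
Since $M$ is a module of $G$ and $g_1,g_2\in M$, the module definition gives $N_G(g_1)\setminus M = N_G(g_2)\setminus M$, and therefore the two external neighborhoods coincide. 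As this holds for every pair of vertices of $W$, the set $W$ is a module of $G\circ H$, which completes the proof.

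I do not expect a genuine obstacle here; the argument is a clean reduction. The only point requiring care is the bookkeeping in the neighborhood computation, namely that once the $G$-coordinates of two vertices differ their adjacency is independent of the $H$-coordinate, so that every external neighbor of a vertex of $W$ has first coordinate outside $M$. The sharpness of the bound, and whether taking $M$ to be the largest proper module of $G$ produces the largest module of $G\circ H$ of this form, is a separate matter that I would leave untouched unless the statement demanded it.
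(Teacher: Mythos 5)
Your proof is correct, and it reaches the paper's conclusion by a slightly different route. The paper works directly with the set $S = V(G\circ H)\setminus (M\times V(H))$ and argues that all vertices of $M\times V(H)$ are at equal distance from every vertex of $S$ (using that $M$ is a module of $G$), so that $S$ is a $k$-ARS with $k\ge |M|\cdot n(H)$. You instead prove the stronger structural fact that $M\times V(H)$ is itself a module of $G\circ H$, via a purely local neighborhood computation, and then invoke Proposition~\ref{prop:module}. The two arguments hinge on the same witness set --- the complement of $M\times V(H)$ --- but your packaging replaces the paper's distance claim (stated there without detail) with an adjacency check that is immediate from the definition of the lexicographic product, and it additionally records that lexicographic products inherit non-trivial modules from their first factor, which is of independent interest. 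Your bookkeeping is right: the $H$-part of each neighborhood stays inside $M\times V(H)$, the external neighborhood is $(N_G(g)\setminus M)\times V(H)$, and the module property of $M$ in $G$ makes these coincide; the hypotheses $n(G)\ge 2$ and $M\ne V(G)$ guarantee that $M\times V(H)$ is a non-empty proper subset, so Proposition~\ref{prop:module} applies.
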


\begin{proof}
By our assumption, $1 \le |M| \le n(G)-1$. We
claim that
$$S = V(G\circ H)\setminus (M\times V(H))$$
is an $(|M|\cdot n(H))$-ARS set. Indeed, let $(g,h)$ and $(g',h')$ be two vertices of $M\times V(H)$. Since $M$ is a module of $G$, we infer that $d_{G\circ H}((g,h), (g'',h'')) = d_{G\circ H}((g',h'), (g'',h''))$ holds for every vertex $(g'',h'')\in V(G\circ H)\setminus (M\times V(H))$.
\end{proof}

\subsection{Cartesian product}\label{sec-Cartesian}

To deal with the Cartesian product, let us first call up the two standard results we need below. First, if $(g, h)$ and $(g', h')$ are vertices of a Cartesian product $G\cp H$, then
\begin{equation}
\label{eq:distance-in-Cartesian}
d_{G\cp H}((g, h), (g', h')) =  d_G(g, g') + d_H(h, h')\,,
\end{equation}
see~\cite[Proposition~5.1]{hik-2011}. Second, if $G$ and $H$ are graphs on at least two vertices, then
\begin{equation}
\label{eq:connectivity-Cartesian}   \kappa(G\cp H) = \min\{\kappa(G)n(H),  \kappa(H)n(G), \delta(G) + \delta(H)\}\,.
\end{equation}
Equation~\eqref{eq:connectivity-Cartesian} was first stated/announced in 1978 by Liouville~\cite{liouville-1978}, and proved only 30 years later by \v Spacapan in~\cite{spacapan-2008}. Its proof can also be found in~\cite[Theorem~25.1]{hik-2011}, where the interesting history behind the formula is also explained.

Consider Hamming graphs $K_n\cp K_n$, $n\ge 4$.
From~\eqref{eq:connectivity-Cartesian} we deduce that $\kappa(K_n\cp K_n) = 2n-2$, and as $K_n\cp K_n$ is $(2n-2)$-regular and of diameter $2$, Corollary~\ref{cor:kappa-regular-diam-2} implies that
$$\Adim(K_n\cp K_n) = 2n-2, n\ge 4\,,$$
a result earlier proved in~\cite[Proposition~3.1(iii)]{fernandez-2023}.

\medskip
Assume that $\Adim(H) = 1$. If one wants to hide the low privacy of this graph in a larger one, then we have the following mild sufficient condition on a graph $G$ which ensures that $\Adim(G\cp H)\ge 2$.

\begin{proposition}
\label{prop:Adim-2-in-Cartesian}
If $G$ is a connected graph with $\#\epsilon(G) \ge 2$ and $H$ is a connected graph with $n(H)\ge 2$, then $\Adim(G\cp H) \ge 2$.
\end{proposition}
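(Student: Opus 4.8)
The plan is to exhibit a single vertex whose singleton set is a $2$-ARS, mirroring the strategy already used in the proof of Lemma~\ref{lem:k-connected} and in Theorem~\ref{thm:strong}. Let $g\in V(G)$ be a vertex realizing $\#\epsilon_G(g) = \#\epsilon(G) \ge 2$, so that there exist at least two distinct vertices $g', g''\in V(G)$ with $d_G(g,g') = d_G(g,g'') = \epsilon_G(g)$. Fix any $h\in V(H)$ and consider the candidate set $S = \{(g,h)\}$ in $G\cp H$. I would then use the Cartesian distance formula~\eqref{eq:distance-in-Cartesian} to show that every equivalence class of $\mathcal{Z}_S$ has size at least $2$, which by the characterization $k = \min\{|Z^i| : Z^i\in\mathcal{Z}_S\}$ forces $\Adim(G\cp H)\ge 2$ via a singleton $2$-ARS.

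The key computation is as follows. For an arbitrary vertex $(a,b)\in V(G\cp H)$, formula~\eqref{eq:distance-in-Cartesian} gives $d_{G\cp H}((g,h),(a,b)) = d_G(g,a) + d_H(h,b)$, so the metric representation of $(a,b)$ with respect to $S$ depends only on the sum $d_G(g,a) + d_H(h,b)$. To show each class has at least two elements, I would pair up vertices realizing the same sum. The cleanest way is to use the two far vertices $g'$ and $g''$: since $n(H)\ge 2$, fix a neighbor $h^*$ of $h$ in $H$. For a generic vertex $(a,b)$ I want to produce a distinct vertex with the same distance to $(g,h)$; the simplest universal pairing uses the second coordinate, noting that adjacent choices of $b$ combined with appropriately shifted first coordinates yield equal sums. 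More robustly, I would argue directly that for every integer value $s$ in the range of the distance function, the level set $L_s((g,h))$ has size at least $2$: this follows because both $g'$ and $g''$ sit at the maximal $G$-distance $\epsilon_G(g)$ from $g$, so $(g',b)$ and $(g'',b)$ lie at the same distance $\epsilon_G(g) + d_H(h,b)$ from $(g,h)$ for every $b\in V(H)$, giving at least two vertices at each such distance; for smaller distances one uses that $n(H)\ge 2$ together with the additive structure to split one $G$-level across two $H$-values.

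I expect the main obstacle to be verifying that \emph{every} equivalence class, not merely the maximal-distance class, has at least two elements — the definition of a $k$-ARS takes the minimum class size over all classes, so a single singleton class would reduce $k$ to $1$. The far-vertex argument handles the largest distance cleanly, but the intermediate and small distances require care: one must ensure no distance value is realized by exactly one vertex. Here the hypothesis $n(H)\ge 2$ is essential, since it lets me use a fixed neighbor $h^*$ of $h$ to observe that $(a,h)$ and $(a,h^*)$ achieve consecutive sums, and combined with the two far vertices of $G$ this should cover all realized distances with multiplicity at least two. The hypothesis $\#\epsilon(G)\ge 2$ is precisely what rules out the degenerate case where the top level $L_{\epsilon_G(g)}(g)$ in $G$ is a singleton, which is exactly the obstruction identified in Corollary~\ref{cor:ecc-to-be-1}. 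Once all classes are shown to have size at least $2$, the conclusion $\Adim(G\cp H)\ge 2$ with a witnessing $2$-ARS of cardinality $1$ is immediate.
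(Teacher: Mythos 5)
Your proof is correct, and the witness is the same as the paper's (a singleton $\{(g,h)\}$ where $g$ has at least two eccentric vertices $g',g''$, lifted to two eccentric vertices $(g',h'),(g'',h')$ of $(g,h)$ via~\eqref{eq:distance-in-Cartesian}), but the way you certify the intermediate equivalence classes differs from the paper. The paper does not verify the level sets by hand: it only shows $\#\epsilon_{G\cp H}((g,h))\ge 2$ and then invokes Lemma~\ref{lem:k-connected}, which handles all levels $L_i$ with $i<\epsilon$ by observing that they are cut sets and hence have size at least $\kappa(G\cp H)\ge 2$ (using that a Cartesian product of two connected graphs on at least two vertices is $2$-connected, via~\eqref{eq:connectivity-Cartesian}). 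You instead argue directly from the additive distance formula: for $s\le\epsilon_G(g)$ you pair $(a,h)$ with $(a',h^*)$ where $d_G(g,a')=d_G(g,a)-1$ and $h^*$ is a neighbour of $h$, and for $s>\epsilon_G(g)$ you pair $(g',b)$ with $(g'',b)$; this covers every realized distance with multiplicity at least two, so the argument closes. Your route is slightly longer but entirely self-contained --- it avoids the connectivity formula for Cartesian products (a nontrivial theorem) and the detour through Lemma~\ref{lem:k-connected} --- whereas the paper's route is shorter by reusing that machinery. The only part of your write-up that needs tightening in a final version is the ``smaller distances'' case, which you describe informally; the explicit pairing above is all that is needed.
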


\begin{proof}
Since $\#\epsilon(G) \ge 2$, there exists a vertex $g\in V(G)$ with $\#\epsilon_G(g) \ge 2$. Let $g'$ and $g''$ be vertices of $G$ with $d_G(g,g') = d_G(g,g'') = \epsilon_G(g)$. Let now $h$ be an arbitrary vertex of $H$. We claim that $\#\epsilon_{G\cp H}((g,h)) \ge 2$.

Let $h'$ be a vertex of $H$ with $d_H(h,h') = \epsilon_H(h)$. Then using~\eqref{eq:distance-in-Cartesian} we infer that
$$d_{G\cp H}((g,h),(g',h')) = d_{G\cp H}((g,h),(g'',h')) = \epsilon_{G\cp H}((g,h))\,.$$
Hence $\#\epsilon_{G\cp H}((g,h)) \ge 2$. Lemma~\ref{lem:k-connected} completes the argument.
\end{proof}

Let $I_G[u,v]$ be the {\em interval} between $u$ and $v$ in $G$, that is, the set of all vertices of $G$ that lie on some shortest $u,v$-path. By definition,
$\{u,v\}\subseteq I_G[u,v]$. The {\em geodetic number} $g(G)$ of $G$ is the cardinality of a smallest set $S\subseteq V(G)$ such that $\cup_{\{u,v\} \in \binom{S}{2}} I_G[u,v] = V(G)$.

\begin{theorem}
\label{thm:Adim-2-in-Cartesian-for-geodetic-number-2}
If $G$ and $H$ are graphs with $g(G) = g(H) = 2$, then $\Adim(G\cp H) \ge 2$.
\end{theorem}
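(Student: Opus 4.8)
The statement concerns graphs $G$ and $H$ with geodetic number $2$; recall that $g(G)=2$ means there exist two vertices $u,v$ such that every vertex of $G$ lies on a shortest $u,v$-path, i.e.\ $I_G[u,v]=V(G)$. A vertex lying on a shortest $u,v$-path is characterized by the equation $d_G(u,w)+d_G(w,v)=d_G(u,v)$, so the whole graph is ``laid out'' between the two extremes $u$ and $v$. My plan is to exploit this by finding a single vertex $(g,h)$ of $G\cp H$ that sees two vertices at its maximal distance, thereby producing a $2$-ARS of cardinality $1$, exactly as in Proposition~\ref{prop:Adim-2-in-Cartesian}; the conclusion $\Adim(G\cp H)\ge 2$ then follows from Lemma~\ref{lem:k-connected}.

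\textbf{Key steps.} First I would fix diametral-type endpoints: let $u,v$ witness $g(G)=2$ and let $u',v'$ witness $g(H)=2$. The natural candidate vertex is one of the ``corners'', say $(u,u')$. Using the Cartesian distance formula~\eqref{eq:distance-in-Cartesian}, one computes $d_{G\cp H}((u,u'),(v,v')) = d_G(u,v)+d_H(u',v')$. The idea is to argue that this is the eccentricity of $(u,u')$ in $G\cp H$: since every vertex of $G$ sits on a shortest $u,v$-path, $d_G(u,\cdot)$ is maximized (equal to $d_G(u,v)$) only at $v$, and likewise $d_H(u',\cdot)$ is maximized only at $v'$; hence by additivity the farthest vertex from $(u,u')$ is uniquely $(v,v')$. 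This gives $\#\epsilon_{G\cp H}((u,u'))=1$, which is the wrong direction — I need $\ge 2$, so a single corner does not suffice, and I must instead locate a vertex whose two coordinates each admit two farthest points.

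\textbf{The actual mechanism.} The correct approach is to look for a vertex with $\#\epsilon\ge 2$ directly. Because $g(G)=2$ via $u,v$, consider a vertex $g$ strictly interior to a shortest $u,v$-path: then $g$ has both $u$-side and $v$-side neighbors and, more usefully, I want $g$ such that $d_G(g,u)$ equals $d_G(g,v)$ or such that $g$ has two vertices at its eccentric distance. In fact the cleanest route is to take the \emph{endpoints} and feed them into Proposition~\ref{prop:Adim-2-in-Cartesian}: it suffices to prove that $g(G)=g(H)=2$ forces $\#\epsilon(G\cp H)\ge 2$. I would show that the vertex $(u,v')$ (mixing one endpoint from each factor) is eccentric to \emph{both} $(v,u')$ and to another corner, because $d_{G\cp H}((u,v'),(v,u')) = d_G(u,v)+d_H(u',v') = \diam$-like maximum, and symmetry of the two witnessing pairs yields a second vertex realizing the same maximal distance. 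Concretely, if both $G$ and $H$ have at least one pair of mutually farthest vertices arising from their geodetic witnesses, the product corner $(u,u')$ has the two distinct vertices $(v,u')$ and $(u,v')$ lying on its geodesics to $(v,v')$, and one pushes this to show two vertices attain the eccentricity.

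\textbf{Main obstacle.} The delicate point is precisely establishing $\#\epsilon(G\cp H)\ge 2$ rather than $=1$: the additive distance formula makes the \emph{unique} farthest corner $(v,v')$ from $(u,u')$ genuinely unique when the geodetic witnesses are the diametral pairs, so a naive corner choice fails. The hard part will be selecting the right base vertex (likely not a corner but an interior vertex of one factor's $u,v$-geodesic, or an asymmetric corner like $(u,v')$) and verifying via~\eqref{eq:distance-in-Cartesian} that it has at least two vertices at maximal distance; once $\#\epsilon(G\cp H)\ge 2$ is secured, invoking Lemma~\ref{lem:k-connected} (which guarantees $\Adim\ge\min\{\kappa,\#\epsilon\}\ge 2$, using $\kappa(G\cp H)\ge 2$ from~\eqref{eq:connectivity-Cartesian}) closes the argument immediately.
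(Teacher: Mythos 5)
Your proposed mechanism has a genuine gap: you try to reduce the theorem to Lemma~\ref{lem:k-connected} by showing $\#\epsilon(G\cp H)\ge 2$, but this quantity can equal $1$ under the hypotheses, so no choice of base vertex can rescue the argument. Take $G=H=P_{2n}$ with vertex set $[2n]$; then $g(G)=g(H)=2$ (the endpoints witness it), yet for a vertex $(i,j)$ of $P_{2n}\cp P_{2n}$ the eccentricity is $\max\{i-1,2n-i\}+\max\{j-1,2n-j\}$ by~\eqref{eq:distance-in-Cartesian}, and since $i-1=2n-i$ is impossible for even order, each coordinate has a \emph{unique} farthest endpoint, hence every vertex of the grid has a unique eccentric vertex and $\#\epsilon(P_{2n}\cp P_{2n})=1$. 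You correctly sensed this obstacle in your last paragraph, but the resolution you hope for (``selecting the right base vertex'') does not exist; the whole point of this theorem in the paper is to cover exactly the grid-like cases that Proposition~\ref{prop:Adim-2-in-Cartesian} (which needs $\#\epsilon(G)\ge 2$) misses.

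The paper's proof uses a different and essential idea: it takes the \emph{two-element} set $S=\{(g,h),(g',h')\}$ built from the geodetic witnesses of the two factors, and shows every vertex $(g'',h'')\notin S$ has a partner $(g''',h''')$ in the same class of $\mathcal{Z}_S$ obtained by moving $g''$ one step toward $g$ along its $g,g'$-geodesic while moving $h''$ one step toward $h'$ along its $h,h'$-geodesic. The additivity in~\eqref{eq:distance-in-Cartesian} makes the two unit changes cancel in the distance to $(g,h)$, and (since both moves happen along geodesics between the witnesses) they also cancel in the distance to $(g',h')$, so the metric representations agree and every class has size at least $2$. This coordinate-trading trick, which requires an anchor set of size two rather than a singleton, is the missing ingredient in your proposal.
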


\begin{proof}
Since $g(G) = g(H) = 2$, there exist vertices $g,g'\in V(G)$ such that $I_G[g,g'] = V(G)$, and vertices $h,h'\in V(H)$ such that $I_H[h,h'] = V(H)$. We claim that $S = \{(g,h), (g',h')\}$ is a $2$-antiresolving set of $G\cp H$.

Let $(g'',h'')$ be an arbitrary vertex from $V(G\cp H)\setminus S$. Since $g(G) = 2$, there exists a shortest $(g,g')$-path $P_G$ in $G$ containing $g''$. Similarly, because $g(H) = 2$, there exists a shortest $(h,h')$-path $P_H$ in $H$ containing $h''$.

Assume first that $g''\cap \{g,g'\} = \emptyset$
and $h''\cap \{h,h'\} = \emptyset$. Let $g'''$ be the neighbor of $g''$ on the $g,g''$-subpath of $P_G$ and let $h'''$ be the neighbor of $h''$ on the $h'',h'$-subpath of $P_H$. Then we have
\begin{align*}
d_{G\cp H}((g''',h'''), (g,h)) & = d_G(g''',g) + d_H(h''',h) \\
& = (d_G(g'',g) - 1) + (d_H(h'',h) + 1) \\
& = d_G(g'',g) + d_H(h'',h)  \\
& = d_{G\cp H}((g'',h''), (g,h))\,.
\end{align*}
It follows that $(g''',h''')$ and $(g'',h'')$ belong to the same equivalence class of $\mathcal{Z}_S$.

Assume second that $g'' = g$. Then $h''\ne h$. Let $g'''$ be the neighbor of $g=g''$ on $P_G$, and let $h'''$ be the neighbor of $h''$ on the $h,h''$-subpath of $P_H$. (It is possible that $h''' = h$.) Then a calculation parallel to the above reveals that $(g''',h''')$ and $(g'',h'')$ belong to the same equivalence class of $\mathcal{Z}_S$.

The remaining three cases are $g'' = g'$, $h'' = h'$, and $h'' = h$. In each of these cases, we can proceed as in the above paragraph to conclude that the vertex  $(g'',h'')$ belongs to an equivalence class of $\mathcal{Z}_S$ of cardinality at least two. We can conclude that $S$ is a $2$-antiresolving set of $G\cp H$.
\end{proof}

In view of Lemma~\ref{lem:k-connected} together with the fact that the Cartesian product of any two graphs of order at least $2$ is $2$-connected, and in view of the results above, it could reasonably be assumed that $\Adim(G\cp H) \ge 2$ holds for any connected graphs $G$ and $H$ of order at least two. Surprisingly, we have checked by computer experiments that $\Adim(T^*\cp P_{2n}) = 1$ for every $n\in[75]$, where $T^*$ is the unique $1$-antidimensional tree of order $7$ (as mentioned at the beginning of Section \ref{sec:adim-1}. This example leads to a more general statement that is next proved.

\begin{proposition}
\label{prop:cp-1}
If $n\ge 1$, then $\Adim(T^*\cp P_{2n}) = 1$.
\end{proposition}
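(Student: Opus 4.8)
The plan is to run Algorithm ADIM-1 (Figure~\ref{alg:ADIM1}) and show that, starting from \emph{any} vertex, the greedy closure it performs reaches the whole vertex set while the minimum class size stays equal to $1$ at every round; by the correctness of ADIM-1 this yields $\Adim(T^*\cp P_{2n}) = 1$. First I would fix coordinates: write $T^*$ as the spider with center $c$ and three legs ending in the leaves $a_1$ (length $1$), $b_2$ (length $2$, through $b_1$), and $c_3$ (length $3$, through $c_1,c_2$), and write $P_{2n}=p_1\cdots p_{2n}$. Throughout I would use the product distance \eqref{eq:distance-in-Cartesian} in the form $d((v,p_i),(v',p_j)) = d_{T^*}(v,v') + |i-j|$.

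The key preliminary fact I would establish is that every vertex of $T^*\cp P_{2n}$ has a \emph{unique} farthest vertex. In $T^*$ a direct check of the seven vertices shows each has a unique eccentric vertex, which is always $b_2$ or $c_3$ (the two ends of the unique diametral path, the leg lengths $1,2,3$ being pairwise distinct); in $P_{2n}$ the eccentric vertex is unique because $i-1 = 2n-i$ has no integer solution. Additivity of the distance then makes the eccentric vertex $(v^*,p_{i^*})$ of $(v,p_i)$ unique as well. Consequently the outermost distance sphere of any singleton $\{(v,p_i)\}$ is a single vertex, so every singleton is a $1$-ARS (consistent with Corollary~\ref{cor:ecc-to-be-1}) and ADIM-1 never halts with \textbf{False} at its first iteration.

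Next I would trace the closure uniformly. From any start $(v,p_i)$, round~1 adds its unique antipode $\gamma$, which is a \emph{leaf corner} $(b_2,p_s)$ or $(c_3,p_s)$ with $s\in\{1,2n\}$. Round~2 adds the antipode $\gamma^*$ of $\gamma$ (again a singleton class, being the unique farthest vertex from $\gamma$, and refining $S$ preserves singleton classes), so $S$ now contains a \emph{diagonal} pair such as $\{(b_2,p_1),(c_3,p_{2n})\}$. The engine of the argument is the vertex $a_1$: it is the only vertex of $T^*$ with $d_{T^*}(\cdot,b_2)+d_{T^*}(\cdot,c_3)=7$, since every other vertex lies on the $b_2$--$c_3$ geodesic and gives the sum $5$. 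Using this, I would show that with respect to the diagonal pair every vertex of the fiber $\{(a_1,p_j):j\in[2n]\}$ is forced to be uniquely determined, so round~3 adds this whole fiber.

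Finally, once $S$ contains the full $a_1$-fiber together with one leaf corner, I would check that it resolves everything: the fiber pins down the path coordinate $j$ (as the unique argmin of the distances to $(a_1,p_\bullet)$) and the value $d_{T^*}(v,a_1)$, while the leaf corner supplies $d_{T^*}(v,b_2)$ (or $d_{T^*}(v,c_3)$); since both pairs $(d_{T^*}(v,a_1),d_{T^*}(v,b_2))$ and $(d_{T^*}(v,a_1),d_{T^*}(v,c_3))$ take seven distinct values over $V(T^*)$, these two distances determine $v$. Hence every remaining vertex is a singleton class, the closure completes to $V(T^*\cp P_{2n})$, and the minimum class size was $1$ at every round. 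The main obstacle is the bookkeeping in round~3: verifying, from the sum-$7$ characterization of $a_1$ and the reflection symmetry of $P_{2n}$, that the entire $a_1$-fiber is simultaneously singled out by the diagonal pair regardless of which of the two possible diagonal pairs arises. The unique-antipode fact and the seven-distinct-values check are then routine, and monotonicity of the partition $\mathcal{Z}_S$ under enlarging $S$ guarantees that once a vertex becomes a singleton class it remains one until it is absorbed into $S$.
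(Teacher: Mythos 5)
Your proposal is correct and follows essentially the same route as the paper's proof: run ADIM-1 from an arbitrary vertex, use the uniqueness of eccentric vertices (which are the four corners over $v_1$ and $v_6$ in the paper's labelling, your $b_2$ and $c_3$) to force a diagonal pair of corners into $S$, then absorb the whole fiber over the pendant vertex attached to the center (the paper's $w$, your $a_1$) and finally all remaining vertices. Your write-up in fact supplies more of the verification than the paper does (the sum-of-distances characterization of the $a_1$-fiber, the injectivity of the distance pairs, and the monotonicity of singleton classes under enlarging $S$).
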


\begin{proof}
Let $n\ge 1$ and set $G = T^*\cp P_{2n}$. Let $V(P_{2n}) = [2n]$, let the consecutive vertices on $P_6$ in $T^*$ be $v_1, \ldots, v_6$, and let $w$ be the vertex of $T^*$ attached to the $v_3$.

To prove the assertion of the proposition, we shall run Algorithm ADIM-1 on $V(G)$. Let $x\in V(G)$ be the first vertex selected by Algorithm ADIM-1. In view of~\eqref{eq:distance-in-Cartesian} we infer that $\#\epsilon_G(x)=1$, where the unique vertex realizing the eccentricity distance from $x$ is from the set $\{(v_1,1),(v_6,1),(v_1,2n),(v_6,2n)\}$. Therefore, at the next step of the algorithm, we either have $\{(v_1,1),(v_6,2n)\}\subseteq S$ or $\{(v_6,1),(v_1,2n)\}\subseteq S$. In either case, Algorithm ADIM-1 afterwards adds the vertices $(w,i)$, $i\in [2n]$, to $S$, as well as, the other two vertices from the set $\{(v_1,1),(v_6,1),(v_1,2n),(v_6,2n)\}$, which have not yet been added. From here it follows that, at the end of the algorithm, $S=V(G)$ holds. We can conclude that $\Adim(G) = 1$.
\end{proof}

With Proposition~\ref{prop:cp-1} in mind, we cannot guarantee anonymity properties of Cartesian product graphs in general.

Now, notice that the grid graphs $P_r\cp P_t$ are covered by Theorem \ref{thm:Adim-2-in-Cartesian-for-geodetic-number-2}, which means that $\Adim(P_r\cp P_t)\ge 2$. Moreover, it is known from \cite{Cangalovic-2018} that this equality indeed holds for the case $r,t$ are even. However, when publishing some sensible information in a social network, we must be aware that some subgraphs of a grid graph can be $1$-metric antidimensional, as we next show.

\begin{proposition}
Let $G$ be a subgraph of $P_{2n}\cp P_{2n}$, $n\ge 1$, obtained by removing only one edge $e$. Then $\Adim(G)=1$ if and only if the end vertices of $e$ are of degree at most $3$.
\end{proposition}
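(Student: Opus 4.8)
The plan is to coordinatize the grid as $V(P_{2n}\cp P_{2n})=[2n]\times[2n]$, so that by~\eqref{eq:distance-in-Cartesian} the grid distance is $d((i,j),(i',j'))=|i-i'|+|j-j'|$. The degree of $(i,j)$ is $4$ in the interior, $3$ on a side, and $2$ at a corner, hence ``both ends of $e$ have degree at most $3$'' is equivalent to ``$e$ lies on the outer cycle'', i.e. $e$ is a boundary edge. Since $\Adim$ is an isomorphism invariant and the dihedral automorphism group of the grid acts transitively on boundary edges (the main-diagonal reflection interchanges top-row and left-column edges, and rotations move between sides), I will assume without loss of generality in the boundary case that $e=(1,c)(1,c+1)$ is a horizontal edge of the top row. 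The degenerate case $n=1$ is just $G=P_4$, for which $\Adim=1$ and every edge qualifies, so I assume $n\ge 2$.

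The first ingredient is a distance lemma: deleting one edge changes $d_G(a,b)$ only when the (then unique) straight geodesic between two collinear vertices runs through $e$, because for non-collinear pairs two edge-disjoint staircase geodesics survive. Consequently, removing the top-row edge $e$ increases, by exactly $2$, only the distance between two top-row vertices straddling columns $c,c+1$, and leaves every other distance equal to the grid distance. In particular the two bottom corners $(2n,1),(2n,2n)$ keep all their grid distances, and since $d(\cdot,(2n,1))$ and $d(\cdot,(2n,2n))$ together recover both coordinates of any vertex, the set $\{(2n,1),(2n,2n)\}$ resolves $G$ entirely; this fact will drive the endgame of the hard direction.

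For the direction ``$e$ not a boundary edge $\Rightarrow \Adim(G)\ge 2$'' (the contrapositive of the ``only if'') I will reuse the antidiagonal structure behind Theorem~\ref{thm:Adim-2-in-Cartesian-for-geodetic-number-2}. If $e$ is on none of rows/columns $1$ or $2n$, then neither $d(\cdot,(1,1))$ nor $d(\cdot,(2n,2n))$ is perturbed, because the only geodesics that could pass through $e$ are the boundary-line geodesics emanating from these corners, which use rows/columns $1$ and $2n$. Hence, exactly as in the full grid, $S=\{(1,1),(2n,2n)\}$ partitions the remaining vertices into antidiagonals $\{(i,j):i+j=s\}$, each of size at least $2$, so $S$ is a $2$-ARS and $\Adim(G)\ge 2$.

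The hard direction ``$e$ a boundary edge $\Rightarrow \Adim(G)=1$'', i.e. that $G$ has no $k$-ARS with $k\ge 2$, is the main obstacle. The engine is a lemma that in $G$ every vertex has a \emph{unique} farthest vertex, $\#\epsilon_G(x)=1$ for all $x$ (the even side length forbids ties in $\max\{i-1,2n-i\}$, and the small top-row perturbation is dominated by the distance to the opposite side), and that this farthest vertex is always a corner, with the farthest-point map acting on corners as the antipodal involution. Assuming for contradiction that $S$ is a $k$-ARS with $k\ge 2$, the unique farthest vertex of each $s\in S$ must lie in $S$ (otherwise it alone realizes the eccentric distance from $s$ and is a singleton class); so $S$ contains a corner and, applying this again to that corner, its whole antipodal pair, giving $S\supseteq\{(1,1),(2n,2n)\}$ or $S\supseteq\{(1,2n),(2n,1)\}$. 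In the first case $(1,2n)$ is a singleton class for $\{(1,1),(2n,2n)\}$ (its distance to $(1,1)$ is inflated while its distance to $(2n,2n)$ is not, so no antidiagonal partner matches), forcing $(1,2n)\in S$ and then its farthest vertex $(2n,1)\in S$; this places the bottom pair $\{(2n,1),(2n,2n)\}$ in $S$, which resolves $G$ and contradicts $k\ge 2$. The second case is symmetric, forcing $(1,1)$ and then $(2n,2n)$ into $S$, again producing the bottom resolving pair. The delicate steps to verify carefully are the uniqueness-of-farthest lemma in the perturbed graph and the singleton claim that upgrades one antipodal corner pair to the clean bottom pair; once these are established, the contradiction closes.
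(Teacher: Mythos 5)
Your proposal is correct, and while it rests on the same two structural facts as the paper's proof, the logical skeleton of the hard direction is genuinely different. The paper handles the boundary-edge case by tracing Algorithm ADIM-1: starting from an arbitrary singleton, the unique eccentric vertex (a corner) is forced into $S$, then the remaining corners, and finally all of $V(G)$; the correctness of this as a proof of $\Adim(G)=1$ is delegated to the algorithm's correctness from~\cite{Chatterjee-2019}, and the trace itself is left somewhat informal (``will further somehow lead to adding the other two corner vertices''). You instead argue directly on an arbitrary $k$-ARS $S$ with $k\ge 2$: the farthest-point map forces $S$ to absorb an antipodal corner pair, the singleton-class computation for the remaining top corner upgrades this to all four corners, and the unperturbed pair $\{(2n,1),(2n,2n)\}$ then resolves $G$, a contradiction. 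This buys self-containedness (no appeal to ADIM-1) and covers arbitrary antiresolving sets rather than only the singleton-seeded closures the algorithm explores; the price is that you must prove the uniqueness-of-farthest-vertex lemma and the resolving-pair claim explicitly, which you correctly identify as the delicate steps and which do check out (the key inequality $2n+1<4n-2$ holds for all $n\ge 2$, and distances to the two bottom corners are never perturbed by a top-row deletion). For the converse you also supply the justification the paper omits with ``readily observed'': the antidiagonal classes of $\{(1,1),(2n,2n)\}$ survive because the up-then-left and down-then-right corner geodesics avoid any non-boundary edge. One small imprecision: the dihedral group does not act transitively on boundary \emph{edges} (an edge incident to a corner and a mid-side edge lie in different orbits); it acts transitively on the four sides, which is all your normalization $e=(1,c)(1,c+1)$ actually requires, so the WLOG stands.
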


\begin{proof}
Let $V(P_{2n})=[2n]$. The case $P_2\cp P_2$ is trivial since it is a cycle $C_4$ and removing any edge yields a path $P_4$, which is $1$-metric antidimensional. Hence, from now on, we consider $n\ge 2$. We may assume (WLOG) that the edge removed is of the form $e=(i,j)(i,j+1)$ where $i\in [2n]$ and $j\in [2n-1]$.

If one of the end vertices of $e$ has degree $4$, then it can be readily observed that $S=\{(1,1),(2n,2n)\}$ is a $2$-ARS of $G$. Thus, $\Adim(G)\ge 2$ in this case.

Hence, by the symmetry of the Cartesian product, it only remains to consider the case when $e=(1,j)(1,j+1)$. To prove that $\Adim(G)=1$, we shall run Algorithm ADIM-1 on $V(G)$. Let $(i',j')$ be the first vertex selected by Algorithm ADIM-1. Note that $\#\epsilon_G((i',j'))=1$, and moreover the unique vertex realizing the eccentricity distance from $(i',j')$ is from the set $\{(1,1),(1,2n),(2n,1),(2n,2n)\}$. Thus, in the first step of the algorithm, such eccentric vertex is added to the set $S$ from the algorithm. We now have two different situations.

\medskip
\noindent
Case 1: $i',j'\in [n]$ or $i',j'\notin [n]$.\\ Hence, the eccentric vertex of $(i',j')$ added to $S$ (in the first step) is either $(1,1)$ or $(2n,2n)$. This, in turn, implies that in the second step the other of these two vertices is also added to $S$. Namely, at this point, it holds that $S=\{(i',j'),(1,1),(2n,2n)\}$. Note it is possible $(i',j')\in \{(1,1),(2n,2n)\}$. Now, in the next step, Algorithm ADIM-1 will add to $S$ the vertex $(1,j+1)$. Notice that it has already happened that $(1,j+1)=(i',j')$, in which case, we would have arrived at a same configuration in the previous step.

Now, if $j+1\ge n+1$, then the unique eccentric vertex of $(1,j+1)$ is $(2n,1)$, which needs to be added to $S$ in the next step. Consequently, $(1,2n)$ will eventually also be added to $S$. After these additions, all the corner vertices $(1,1),(1,2n),(2n,1),(2n,2n)$ are in $S$, which leads the algorithm to add all the remaining vertices of $G$ to $S$ in the next (and final) step.

\medskip
\noindent
Case 2: $i'\in [n]$ and $j'\notin [n]$; or $i'\notin [n]$ and $j'\in [n]$.\\
In this situation, we proceed similarly to Case 1. The main difference is that in the first step of the algorithm, one of the two vertices $(1,2n)$ or $(2n,1)$ is added to $S$, and afterwards, the other one also. Also, a next step will lead to add the vertex $(1,j)$ to $S$, which will further somehow lead to adding the other two corner vertices $(1,1),(2n,2n)$ as well.

\medskip
In both cases, Algorithm ADIM-1 ends with the set $S=V(G)$. Therefore, it will return the answer $\Adim(G)=1$, which is the desired conclusion.
\end{proof}

\section{Concluding remarks}

We conclude our exposition by pointing out a few open questions that might be of interest as a continuation of this research line.

\begin{itemize}
    \item According to the experimental results from Section \ref{sec-exper}, we have noted that usually all the $1$-metric antidimensional graphs have connectivity at most $2$. In this sense, it is true that if $G$ is a graph with $\kappa(G)\ge 3$, then $\Adim(G)\ge 2$?

    The question above can also be stated in the following way: Find $1$-metric antidimensional graphs $G$ with $\kappa(G)\ge 3$.
    \item Again according to the experimental results from Section \ref{sec-exper}, all the  $1$-metric antidimensional graphs $G$ of order $n$ that we have found satisfy that $|E(G)| \le \frac{n(n-1)}{4}$. This suggests the property that dense graphs are not $1$-metric antidimensional. Hence, is it true that if $G$ satisfies that $|E(G)|\ge \frac{n(G)(n(G)-1)}{4}$, then $\Adim(G)\ge 2$?
    \item In Subsection~\ref{sec_block} we have seen that it can be determined in $O(n^3)$-time whether a given geodetic graph $G$ is $1$-metric antidimensional. A structural characterization of $1$-metric antidimensional graphs remains a challenging open problem.
    \item Subsection \ref{sec-Cartesian} shows the existence of Cartesian product graphs that are $1$-metric antidimensional. In this sense, it is worth of considering the properties that two graphs $G$ and $H$ must satisfy so that $G\cp H$ is $1$-metric antidimensional.
    \item Are there any other transformations that can be developed on $1$-metric antidimensional graphs in order to obscure their privacy properties against active attacks?
\end{itemize}

\section*{Acknowledgements}

This investigation has been supported by the Project HERMES funded by the European Union NextGenerationEU/PRTR via INCIBE. S. Cicerone and G. Di Stefano were partially supported by the European Union - NextGenerationEU under the Italian Ministry of University and Research (MUR) National Innovation Ecosystem grant ECS00000041 - VITALITY - CUP J97G22000170005, and by the Italian National Group for Scientific Computation (GNCS-INdAM). S. Klav\v{z}ar acknowledges the financial support from the Slovenian Research and Innovation Agency ARIS (research core funding P1-0297 and projects N1-0285, N1-0355). I. G. Yero has been partially supported by the Spanish Ministry of Science and Innovation through the grant PID2023-146643NB-I00.


\begin{thebibliography}{99}

\bibitem{BarabasiAlbert-2002}
R.~Albert, A.-L.~Barab\'asi,
Statistical mechanics of complex networks,
Rev.\ Mod.\ Phys.\ 74(1) (2002) 47--97.

\bibitem{Cangalovic-2018} M.~\v{C}angalovi\'c,  V.~Kova\v{c}evi\'c-Vuj\v{c}i\'c, J.~Kratica, $k$-metric antidimension of wheels and grid graphs.
Proceeding of the ``XIII Balkan Conference on Operational Research'', Belgrade, Serbia. 2018. ISBN: 978-86-80593-64-7.

\bibitem{Chatterjee-2019} T.~Chatterjee, B.~DasGupta, N.~Mobasheri, V.~Srinivasan, I.~G.~Yero,
On the computational complexities of three problems related to a privacy measure for large networks under active attack,
Theor.\ Comput.\ Sci.\ 775 (2019) 53--67.

\bibitem{CLR90}
T.~H.~Cormen, C.~E.~Leiserson, R.~L.~Rivest, C.~Stein,
Introduction to Algorithms, $2^{\mathrm{nd}}$ edition,
MIT Press, 2001.

\bibitem{DasGupta-2019}
B.~DasGupta, N.~Mobasheri, I.~G.~Yero,
On analyzing and evaluating privacy measures for social networks under active attack,
Inf.\ Sci.\ 473 (2019) 87--100.

\bibitem{ErdosRenyi-1959}
P.~Erd\"os, A.~R\'{e}nyi,
On random graphs.\ {I},
Publ.\  Math.\ Debrecen 6 (1959) 290--297.

\bibitem{fernandez-2023}
E.~Fern\'andez, D.~Kuziak, M.~Mu\~noz-M\'arquez, I.~G.~Yero,
On the $(k,\ell )$-anonymity of networks via their $k$-metric antidimension,
Sci.\ Reports 13 (2023) 19090.

\bibitem{Fienberg-2012}
S.~E.~Fienberg,
A brief history of statistical models for network analysis and open challenges,
J.\ Comput.\ Graph.\ Stat.\ 21 (2012) 825--839.

\bibitem{habib-2010}
M.~Habib, C.~Paul,
A survey of the algorithmic aspects of modular decomposition,
Comput.\ Sci.\ Rev.\ 4 (2010) 41--59.

\bibitem{hik-2011}
R.~Hammack, W.~Imrich, S.~Klav\v{z}ar,
Handbook of Product Graphs, Second Edition,
CRC Press, Boca Raton, FL, 2011.

\bibitem{Kratica-2019} J.~Kratica, V.~Kova\v{c}evi\v{c}-Vuj\v{c}i\'c, M.~\v{C}angalovi\'c,
$k$-metric antidimension of some generalized Petersen graphs,
Filomat 33 (2019) 4085--4093.

\bibitem{liouville-1978}
B.~Liouville,
Sur la connectivit\'e des produits de graphes,
C.\ R.\ Acad.\ Sci.\ Paris S\'er.\ A-B 286 (1978) A363--A365.

\bibitem{Mauw-2019} S.~Mauw, Y.~Ram\'irez-Cruz, R.~Trujillo{-}Rasua,
Conditional adjacency anonymity in social graphs under active attacks,
Knowl.\ Inf.\ Syst.\ 61 (2019) 485--511.

\bibitem{Mauw-2016} S.~Mauw, R.~Trujillo{-}Rasua, B.~Xuan, Counteracting active attacks in social network graphs. In Data and Applications Security and Privacy XXX: 30th Annual IFIP WG 11.3 Conference, DBSec 2016, Trento, Italy, July 18--20, 2016. Proceedings 30 (pp. 233--248). Springer International Publishing.

\bibitem{McAuley-2012}
J.~McAuley, J.~Leskovec,
Learning to Discover Social Circles in Ego Networks,
{NIPS}, 2012. \url{https://snap.stanford.edu/data/ego-Facebook.html}

\bibitem{Rossi-2015}
R.~A.~Rossi, N.~K.~Ahmed,
The Network Data Repository with Interactive Graph Analytics and Visualization,
{AAAI} (2015),
\url{https://networkrepository.com}

\bibitem{RozemberczkiEtAl-2019}
R.~Rozemberczki, R.~Davies, R.~Sarkar, C.~Sutton,
{GEMSEC}: {G}raph {E}mbedding with {S}elf {C}lustering,
In: IEEE/ACM Int. Conf. on Advances in Social Networks Analysis and Mining (2019) (pp. 65-72), ACM. \url{https://snap.stanford.edu/data/gemsec-Facebook.html}

\bibitem{spacapan-2008}
S.~\v{S}pacapan,
Connectivity of Cartesian products of graphs,
Appl.\ Math.\ Lett.\ 21 (2008) 682--685.

\bibitem{Stemple-1968}
J.~G.~Stemple, M.~E.~Watkins,
On planar geodetic graphs,
J.\ Combin.\ Theory 4 (1968) 101--117.

\bibitem{trujillo-2016}
R.~Trujillo{-}Rasua, I.~G.~Yero,
Characterizing $1$-metric antidimensional trees and unicyclic graphs,
Comput.\ J.\ 59 (2016) 1264--1273.

\bibitem{Trujillo-Rasua-2016}
R.~Trujillo{-}Rasua, I.~G.~Yero,
$k$-metric antidimension: A privacy measure for social graphs,
Inf.\ Sci.\ 328 (2016) 403--417.

\bibitem{Zhang-2017}
C.~Zhang, Y.~Gao,
On the complexity of $k$-metric antidimension problem and the size of $k$-metric antiresolving sets in random graphs,
In: International Computing and Combinatorics Conference (2017) (pp. 555--567). Springer, Cham.

\end{thebibliography}
\end{document}